\newtheorem{theorem}{Theorem}
\newtheorem{lemma}{Lemma}
\newtheorem{example}{Example}
\newtheorem{prop}{\textbf{Proposition}}
\newtheorem{definition}{\textbf{Definition}}
\newcommand{\Expt}{\mbox{${\mathbb E}$} }
\newcommand{\Ic}{\mathcal{I}}
\newcommand{\Kc}{\mathcal{K}}
\newcommand{\Nc}{\mathcal{N}}
\newcommand{\Qc}{\mathcal{Q}}
\newcommand{\Sc}{\mathcal{S}}
\newcommand{\Tc}{\mathcal{T}}
\newcommand{\Nb}{\mathbb{N}}
\newcommand{\Eb}{\mathbb{E}}
\newcommand{\Fb}{\mathbb{F}}
\newcommand{\NS}{\mathrm{NS}}
\newcommand{\NB}{\mathrm{NB}}
\begin{document}

\title{Two-Level Private Information Retrieval}

\author{Ruida Zhou, Chao Tian, Hua Sun, and James S. Plank}
\date{}

\maketitle

\begin{abstract}
In the conventional robust $T$-colluding private information retrieval (PIR) system, the user needs to retrieve one of the possible messages while keeping the identity of the requested message private from any $T$ colluding servers. Motivated by the possible heterogeneous privacy requirements for different messages, we consider the $(N, T_1:K_1, T_2:K_2)$ two-level PIR system with a total of $K_2$ messages in the system, where $T_1\geq T_2$ and $K_1\leq K_2$. Any one of the $K_1$ messages needs to be retrieved privately against $T_1$ colluding servers, and any one of the full set of $K_2$ messages needs to be retrieved privately against $T_2$ colluding servers. We obtain a lower bound to the capacity by proposing two novel coding schemes, namely the non-uniform successive cancellation scheme and the non-uniform block cancellation scheme.  A capacity upper bound is also derived. The gap between the upper bound and the lower bounds is analyzed, and shown to vanish when $T_1=T_2$. Lastly, we show that the upper bound is in general not tight by providing a stronger bound for a special setting. 
\end{abstract}

\section{Introduction}
Capacity characterizations of the canonical private information retrieval (PIR) system and its variants have drawn considerable attention recently in the information and coding theory community, for which novel code constructions and impossibility results have been  discovered.

In the canonical PIR model, user privacy needs to be preserved during message retrieval from replicated servers, i.e., the identity of the desired message should not be revealed to any single server. Specifically, the user is required to retrieve one of the $K$ messages from $N$ servers, each of which stores a copy of $K$ messages, such that the identity of the desired message is not revealed to any single server. 
In the PIR capacity characterization problem, the goal is to identify the minimum download cost, i.e., the minimum amount of download per-bit of the desired message, the inverse of which is referred to as the capacity of PIR. The PIR capacity was characterized in \cite{sun2017capacity1} through a code construction and a matching converse bound. The code construction recursively exploits three key elements: server symmetry, message symmetry, and side information; the converse bound recursively reduces the problem scale by utilizing the privacy constraint. 

The canonical PIR problem formulation is to some extent idealized and possesses abundant symmetry and homogeneity (both in the servers and messages), which were judiciously exploited in the code construction proposed in  \cite{sun2017capacity1}. Going forward, it is imperative to enrich the canonical model to make it more heterogenous and comprehensive so that 1) practical constraints that arise naturally in diverse applications are incorporated and tackled, and 2) the extendability and limitation of the capacity results \cite{sun2017capacity1} are better understood. Along this line, the following aspects that generalize the canonical model have been studied in the literature. 

\begin{enumerate}
\item Colluding pattern: Privacy is guaranteed against each single server in the canonical model, which has been generalized to any set of $T$ colluding servers in \cite{sun2017capacity} where the optimal code construction relies on MDS coded queries built upon the same recursive procedure as in \cite{sun2017capacity1}. The $T$-colluding privacy constraint was further generalized to the fully heterogeneous model where each colluding set of servers can be an arbitrary subset of all servers \cite{tajeddine2017private, yao2020capacity}. Interestingly, while server symmetry appears to be broken, the recursively constructed MDS coded queries can still be allocated according to a linear program defined by the colluding sets, and furthermore, this elegant solution was shown to be optimal \cite{yao2020capacity}.  
\item Download per server: As the message size is allowed to approach infinity in capacity characterizations, the download size per server can be made the same through symmetrization in the canonical model \cite{tian2019capacity}. However, if other metrics are considered such as message size \cite{sun2017optimal, tian2019capacity, zhou2020capacity} or physical constraints that limit the communication link between each server and the user \cite{banawan2019asymmetry}, schemes with heterogeneous downloads per server are useful and sometimes necessary. While server symmetry is lost, the iterative construction from \cite{sun2017capacity1} can proceed with the two remaining elements in a similar manner \cite{sun2017optimal, banawan2019asymmetry, lin2018asymmetry}.
\item Message size: The $K$ messages are assumed to have equal length and allowed to approach infinity in the canonical model. The generalization to arbitrary different lengths was considered in \cite{vithana2020semantic} and the iterative construction from \cite{sun2017capacity1} was applied to truncated subsets of messages with the same length \cite{vithana2020semantic}. The minimum message sizes for capacity-achieving codes were considered in \cite{tian2019capacity, zhou2020capacity} where server symmetry and side information were utilized in the code constructions. 
\item Server storage: Each server has the same storage capability and stores all $K$ messages in the canonical model. The storage system at the servers has been generalized to MDS coded \cite{banawan2018capacity, freij2017private} or coded by a given linear code \cite{kumar2019achieving} for each message, and arbitrarily uncoded \cite{attia2018capacity} with heterogeneous capabilities \cite{banawan2020capacity, woolsey2020uncoded}. In these settings the iterative construction from \cite{sun2017capacity1} is still largely compatible with the storage structure. However, for the general model where all messages can be jointly coded, the tradeoff between the storage constraint and the download cost is far from being fully understood \cite{tian2018shannon, sun2019breaking, tian2020storage, guo2020new}.
\end{enumerate}

The main motivation of this work is a crucial aspect that has not been previously addressed - the heterogeneity of the privacy constraints on the messages. That is, in all existing works, each message is required to be equally private in the sense that any single server \cite{sun2017capacity1}, or any colluding set of $T$ servers \cite{sun2017capacity}, is completely ignorant of the desired message identity. However, the sensitivities of different types of information are commonly different in practice. To be more concrete, let us consider the following  simple example setting.

\begin{example}
There are a total of four short videos, which are replicated on six storage servers. The first two videos are political campaign videos from two opposing political parties, while the other two videos are non-political music videos. Given the sensitivity of revealing one's political view, as well as the requirement of protecting the user's privacy in a general sense, the user may wish to assure the following privacy protection when retrieving one of these videos: 
\begin{itemize}
\item Any one of the servers will not be able to infer any information regarding which message is being requested;
\item Any three of the servers jointly will not be able to infer any information regarding which one of the first two messages is being requested. 
\end{itemize}
Let us consider several scenarios to clarify the privacy protection these requirements are offering: 1) When any video is retrieved, any one of the server will not infer any information regarding which was being requested, and any four or more servers may collude to infer exactly which was being retrieved, 2) When the user retrieves one of the political campaign videos, any two or three servers may collude to infer that the retrieved video is indeed a political campaign video, but they will not be able to infer which one it is, thus protecting the user's political view; 3) When a non-political video is retrieved, any two or three servers may collude to infer exactly which video is retrieved. Therefore, the user's political view is indeed protected in a stronger manner than his preference among general contents. It should be noted that the user is not enforcing a stronger privacy protection against the fact that a political video is retrieved in general, since this fact alone does not reveal any sensitive information about the user's political preference: only the information on exactly which political video is retrieved will reveal such information. In fact, this information regarding whether a political video was requested is protected against any single server, which is offered by the lower privacy protection level.  
\end{example}


Motivated by the consideration above, we formulate the problem of multilevel private information retrieval problem. Specifically, the {\em privacy level} of a message set is defined as the maximum allowed number of colluding servers that the identity of a desired message is kept private among that message set. We focus on the two-level PIR system, where some $K_1$ messages out of the $K_2$ messages have a higher privacy level of $T_1$, i.e., any colluding set of $T_1$ servers do not learn anything about which one of the $K_1$ messages is desired, while all the $K_2$ messages together have a lower privacy level of $T_2$, any colluding set of $T_2$ servers do not learn anything about which one of the $K_2$ messages is desired. 

Characterizing the capacity of the two-level PIR system turns out to be rather challenging. A naive approach, which can be used as a baseline, is to treat the system as if it were a homogeneous $T_1$-colluding private information retrieval system. However, the crux of the two-level PIR hinges on how to leverage the less stringent privacy requirement for some messages. Towards this end, we must treat the two sets of messages with distinct privacy levels differently, i.e., message symmetry cannot be taken for granted. Without message symmetry, the iterative construction breaks since message symmetry is the key step that enables the connection between the layers, and we have to delve deeper into the code structure to adjust the parameters of the MDS coded queries in a heterogeneous manner. As a result, we discover two general schemes that can outperform the naive baseline scheme. For the converse direction, we first apply the iterative induction technique to obtain a general upper bound, and analyze the gap between the upper bound and the lower bound. We then show that this bound is strictly sub-optimal by deriving a tighter bound for a special case. This implies that the induction technique must be combined with more delicate consideration on the heterogeneous nature of the system. This observation may shed some light on other open settings, where it is not known if similar symmetric reduction based converse bounds are tight \cite{sun2017private, banawan2018multi, banawan2019asymmetry}. 

\bigskip
\noindent\textbf{Notations: }We adopt the notation $i:j \triangleq \{i, i+1, \ldots, j-1, j\}$. Denote vector $a_{\Nc} \triangleq (a_i)_{i \in \Nc}$ for any sequence $(a_1, a_2, \ldots)$ and $\Nc \subset \Nb$. We use $X \sim Y$ to indicate that the random variables $X$ and $Y$ follow an identical distribution. For any matrix $A[:,:]$, the first coordinate is for row indices and the second coordinate is for column indices.

\section{Problem Formulation} \label{sec:problem-formulation}

There are $K_2$ mutually independent messages $W_{1:K_2}=(W_1, W_2, \ldots, W_{K_2})$ in the system. Each message is uniformly distributed over $\Fb_q^{L}$, where $\Fb_q$ is a large enough finite field and $L$ is the number of $q$-ary symbols in the message (i.e., the message length). This is equivalent to
\begin{align}
&H(W_1) = H(W_2) = \cdots = H(W_{K_2}) = L, \\
&H(W_{1:K_2}) = K_2L,
\end{align}
where (and in the rest of this work) we take base-$q$ logarithm for simplicity. 
There are $N$ servers in the system, each of which stores a copy of all the $K_2$ messages. Let $k^* \in 1:K_2$ be the identity of the desired message. The process to retrieve message $W_{k^*}$, for any $k^* \in 1:K_2$, involves three steps:
\begin{enumerate}
\item[1.](Query) The user sends a randomized query $Q^{[k^*]}_{n}$ to server $n$ for each $n \in 1:N$;
\item[2.](Answer) Each server $n$, where $n \in 1:N$, returns an answer $A^{[k^*]}_n$ to the user;
\item[3.](Recovery) The user recovers the message as $\hat{W}_{k^*}$, using the queries $Q^{[k^*]}_{1:N}$ to all the servers and the answers $A^{[k^*]}_{1:N}$ from all the servers.
\end{enumerate}

Denote the set of all possible queries sent to server $n$ as $\Qc_n$. $Q^{[k^*]}_{n} \in \Qc_n$ is a random variable, whose superscript $[k^*]$ indicates that the query is for retrieving message $W_{k^*}$. The user has no knowledge of $W_{1:K_2}$, and thus the queries are independent of the messages, that is
\begin{align}
I(Q^{[k^*]}_{1:N} ; W_{1:K_2}) = 0, \quad \forall k^* \in 1:K_2.
\end{align}
Each symbol of the answer $A^{[k^*]}_n$, the answer to the query $Q^{[k^*]}_n$, is a sequence of symbols in $\Fb_q$; denote the number of symbols of $A^{[k^*]}_n$ as $\ell^{[k^*]}_n$. The answer $A^{[k^*]}_n$ is a deterministic function of the query $Q^{[k^*]}_n$ and the messages $W_{1:K_2}$, that is
\begin{align}
H(A^{[k^*]}_n | Q^{[k^*]}_n, W_{1:K_2}) = 0, \quad \forall k^* \in 1:K_2,~n \in 1:N.
\end{align}
The recovered message $\hat{W}_{k^*}$ depends on the queries $Q_{1:N}^{[k^*]}$ as well as the answers $A_{1:N}^{[k^*]}$, that is
\begin{align}
H(\hat{W}_{k^*} | A_{1:N}^{[k^*]}, Q_{1:N}^{[k^*]} ) = 0, \quad \forall k^* \in 1:K_2.
\end{align}

The message should be retrieved correctly, i.e., $W_{k^*} = \hat{W}_{k^*}$ for all $k^* \in 1:K_2$. Additionally, the system has certain privacy requirements. To measure user privacy when querying for any message in a certain set of messages, we first introduce the definition of \textit{privacy level}.

\begin{definition}[Privacy level] \label{def:privacy level}
Let the messages in the system be $W_1,W_2,\ldots,W_K$. The queries of a scheme have privacy level $T$ for a subset of messages $W_{\Sc}$, where $\Sc \subseteq 1:K$, if for any $\Tc \subseteq 1:N$ with $|\Tc| = T$, for retrieving any message in $W_\Sc$, the queries to the servers in $\Tc$ have the same joint distribution, i.e.,  
\begin{align}
Q^{[k]}_{\Tc} \sim Q^{[k']}_{\Tc}, \qquad \forall k, k' \in \Sc.
\end{align}
\end{definition}

The notion of privacy level has the following operational meaning: if $W_{\Sc}$ has privacy level $T$, then when one of the messages in $W_{\Sc}$ is retrieved, even if any $T$ of the $N$ servers collude, the identity of the requested message in $W_{\Sc}$ remains private, however these colluding servers may be able to infer that the requested message is in the set $W_\Sc$. It is straightforward to verify that the set of messages with higher privacy level automatically has lower privacy levels. In addition, when the set $\Sc$ is a singleton, if $T$ servers can infer the desired message is in $W_\Sc$, the identity of the desired message is known. Thus it is not meaningful to study the privacy level of $W_{\Sc}$ for singleton $\Sc$, though we will still allow it for notational convenience.

In this work, we consider the two-level PIR system. The system parameters in such a system are $(N, T_1: K_1, T_2 : K_2)$ with $T_1 \geq T_2\geq 1$ and $1\leq K_1 \leq K_2$. All the messages $W_{1:K_2}$ have the default weaker privacy level $T_2$, but the first $K_1$ messages $W_{1:K_1}$ have an enhanced privacy level $T_1$.  We are interested in the retrieval rate (or simply rate) which is the number of useful message symbols retrieved per unit download
\begin{align}
R \triangleq  \frac{L}{\sum_{n = 1}^N \Expt [\ell_n^{[k^*]}]}.
\end{align}
The download cost $D$ is defined as the inverse of $R$, i.e., $D \triangleq R^{-1}$. It is clear that schemes with higher achievable rates are preferred, and the supremum of the achievable rates among all possible schemes is called the capacity of the system, denoted as $C$.

\section{Main Result} \label{sec:main}

We first provide some new notation. 
Define the function $D^*_N(K, T)$ as follows
\begin{align}
D^*_N(K, T) \triangleq 1 + \frac{T}{N} + \cdots + \left( \frac{T}{N} \right)^{K-1}, \quad \forall T, K, N \in \Nb,
\end{align}
whose inverse is the capacity of the $T$-colluding PIR system with $N$ servers and $K$ messages (sometimes simply referred to as a $T$-private system). The main result of this work is summarized in the theorem below.
\begin{theorem}\label{thm:main}
The capacity $C$ of the $(N, T_1:K_1, T_2:K_2)$ two-level PIR system satisfies
\begin{align}
 \max\left( R_{\NS}, R_{\NB}\right) \leq C \leq \overline{R},
\end{align}
where
\begin{align}
\overline{R} &= \left( D_N^*\left(K_1, T_1\right) +  \frac{T_2}{N} \left( \frac{T_1}{N} \right)^{K_1-1} D_N^*(K_2-K_1, T_2)  \right)^{-1}, \label{eqn:upper} \\
R_{\NS} &= \left( D_N^*\left(K_1, T_1\right) +  \left( \frac{T_1}{N} \right)^{K_1} D_N^*(K_2-K_1, T_2)  \right)^{-1},\label{eqn:MDS-rate} \\
R_{\NB} &=\left( \max\left(  D_N^*(K_1, T_1) + \frac{T_2}{N} D_N^*(K_2-K_1, T_2),~D_N^*(K_2-K_1, T_2) + \frac{T_2}{N} D_N^*(K_1, T_1)\right) \right) ^{-1}.
\end{align}
\end{theorem}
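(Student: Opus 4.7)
The plan is to prove the three bounds separately: the upper bound $\overline{R}$ via an inductive converse argument, and the two achievable rates $R_{\NS}$ and $R_{\NB}$ via explicit code constructions.

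For the converse bound $\overline{R}$, I would adapt the recursive peeling technique that underlies the classical proof of $D_N^*(K,T)$ for $T$-private PIR. Let $D(a, b)$ denote the smallest achievable download cost when there are $a$ messages with the stronger $T_1$-privacy and $b$ additional messages with only $T_2$-privacy. The plan is to establish three ingredients: (i) for $a \geq 2$, the recursion $D(a, b) \geq 1 + (T_1/N)\, D(a-1, b)$; (ii) for $a = 1$, the transitional inequality $D(1, b) \geq 1 + (T_2/N)\, D(0, b)$; and (iii) the base case $D(0, b) = D_N^*(b, T_2)$. Ingredients (i) and (ii) follow from the standard Sun-Jafar chain: decoding plus correctness gives $\sum_n \Eb[\ell_n^{[1]}] \geq L + H(A_{1:N}^{[1]} \mid Q_{1:N}^{[1]}, W_1)$; the residual entropy is then lower bounded by exploiting the appropriate privacy level on any $T_1$- or $T_2$-subset of servers to substitute the desired message index, averaging over all such subsets to introduce a factor $T_i/N$, and recognizing the resulting object as the download for a two-level PIR on $K_2-1$ messages. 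Unrolling the recursion in the order (i), (i), $\ldots$, (i), (ii), base produces exactly $\overline{R}^{-1}$.

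For $R_{\NS}$, the non-uniform successive cancellation scheme extends the Sun-Jafar recursive construction by using two different MDS coding parameters across two message layers: in the first $K_1$ iterations, generate $T_1$-MDS-coded queries that respect message symmetry within the high-privacy group $W_{1:K_1}$; in the subsequent $K_2 - K_1$ iterations, switch to $T_2$-MDS-coded queries that mix in the low-privacy messages $W_{K_1+1:K_2}$ with full $K_2$-message symmetry. Because $T_1 \geq T_2$, the $T_1$-MDS queries remain compatible with the weaker outer $T_2$-privacy, so the composite scheme delivers both required privacy levels, and bookkeeping of new symbols and recycled side-information symbols per iteration yields the download $D_N^*(K_1, T_1) + (T_1/N)^{K_1} D_N^*(K_2-K_1, T_2)$. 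For $R_{\NB}$, the non-uniform block cancellation scheme instead treats the two groups as separate blocks, running a $T_1$-private subscheme on one block and a $T_2$-private subscheme on the other, and splicing the two with undesired-block downloads acting as side information at a crossover rate of $T_2/N$; the two orderings of the blocks correspond to the two arguments inside the $\max$, and the better ordering is selected.

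The principal obstacles are twofold. On the achievability side, the symmetry breaking between the two message groups forces the single-parameter Sun-Jafar template to be replaced by a heterogeneous MDS query allocation whose validity --- decodability combined with simultaneous $T_1$- and $T_2$-privacy --- requires careful verification, and one must check that the two designs (NS and NB) each respect both privacy levels without sacrificing decoding. On the converse side, the delicate step is the transition when $a$ drops from $1$ to $0$: justifying that the final peeling must degrade the factor from $T_1/N$ to $T_2/N$ relies on the observation that $T_1$-privacy of the singleton $\{W_{K_1}\}$ carries no operational content beyond the default $T_2$-privacy this message shares with $W_{K_1+1:K_2}$. That this induction-based bound is ultimately not tight (as announced in the abstract) suggests that any refinement must go beyond single-message peeling to capture a genuinely multi-message constraint imposed by the heterogeneous privacy structure.
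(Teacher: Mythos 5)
Your converse argument takes essentially the same route as the paper's: the two-regime recursion you propose (factor $T_1/N$ while $a\geq 2$, factor $T_2/N$ at the crossover, base case $D_N^*(b,T_2)$) is exactly the content of the paper's Lemma~\ref{lem:iter}, proved by the same Sun--Jafar privacy-swap plus Han's-inequality steps, and unrolling it produces $\overline{R}^{-1}$ just as in Section~\ref{sec:upper-bound}. The slight difference is only organizational --- the paper phrases the recursion directly on the conditional entropies $H(A_{1:N}^{[i]}\mid Q_{1:N}^{[i]},W_{1:i})$ rather than introducing a two-parameter cost $D(a,b)$ --- but the transition point (the single $T_2/N$ step as $a$ drops from $1$ to $0$) and its justification (swapping between a $T_1$-private and a $T_2$-private message can only use the common privacy level $T_2$) are identical.

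On achievability, the sketch identifies the right obstacle (heterogeneous MDS allocation under symmetry breaking), but two specifics do not match the paper. For the NS scheme, the paper's construction is not a two-phase iteration in which the first $K_1$ rounds involve only the high-privacy messages and the subsequent $K_2-K_1$ rounds mix in the rest; instead every one of the $K_2$ layers contains coding groups indexed by arbitrary $\Kc\subset 1:K_2$, and the MDS ratio $n(\Kc):k(\Kc)$ applied to a given group is $N:T_1$ or $N:T_2$ according to whether the \emph{desired} message lies in $1:K_1$ or $K_1+1:K_2$ --- so the same coding group gets re-parametrized across retrievals, and the real work is choosing the explicit size family $d_{i,j}$ and verifying the tiling and decodability properties in Lemma~\ref{lem:property-mnk}. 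For the NB scheme, the $\max$ in $R_{\NB}$ is structural, not an ordering choice: the mixed third block has size $(1-T_2/N)\max(t_1,t_2)$ because leftover symbols from whichever precoded table is larger cannot be paired and must still be downloaded, so the total download is $\max\bigl(t_1+\tfrac{T_2}{N}t_2,\ \tfrac{T_2}{N}t_1+t_2\bigr)$ regardless of how the blocks are ordered; reading it as ``the better ordering is selected'' would wrongly suggest a $\min$, which the construction does not achieve. These are the pieces your sketch would need to fill in to reach the stated bounds.
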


The lower bound to the capacity in this theorem has two components: $R_{\NS}$ is obtained by the Non-uniform Successive-cancellation (NS) coding scheme given in Section \ref{sec:NS}, and $R_{\NB}$ is obtained by the Non-uniform Block-cancellation (NB) coding scheme given in Section \ref{sec:NB}. The proof for the upper bound $\overline{R}$ is given in Section \ref{sec:upper-bound}. The upper bound $\overline{R}$ in Theorem \ref{thm:main} is in general not tight. Specifically, the following proposition tightens the upper bound for the $(3, 2:2, 1:3)$ two-level PIR system, for which Theorem \ref{thm:main} gives an upper bound of $\frac{9}{17}$.
\begin{prop}\label{lem:3,2:2,1:3}
The capacity $C$ of the $(3, 2:2, 1:3)$ two-level PIR system satisfies
\begin{align}
C \leq \frac{11}{21}.
\end{align}
\end{prop}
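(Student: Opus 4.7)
The strategy is to strengthen the general converse underlying Theorem~\ref{thm:main} by localizing where its induction leaves slack on the $(3,2:2,1:3)$ instance. The generic bound yields $D/L \geq D^*_3(2,2) + (2/3)(1/3)D^*_3(1,1) = 1 + 2/3 + 2/9 = 17/9$; to reach $D/L \geq 21/11 = 1 + 2/3 + 8/33$, one has to replace the factor $1/3$, which appears at the moment the induction switches from the $T_1=2$ regime (covering $W_1,W_2$) to the $T_2=1$ regime (covering $W_3$), by the sharper factor $4/11$.

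\textbf{Setup and first two recursion steps.} Symmetrize the scheme across the three server labels so that $\Expt[\ell_n^{[k]}] = D/3$ for every $n,k$, which is without loss of generality for capacity. Let $H_k := H(A_{1:3}^{[k]} \mid Q_{1:3}^{[k]}, W_{1:k})$. Correctness of $W_1$ gives $D \geq L + H_1$, and the standard Han-type symmetrization combined with the $T_1=2$ privacy identity $(Q_{\mathcal{T}}^{[1]}, A_{\mathcal{T}}^{[1]}, W_{1:3}) \sim (Q_{\mathcal{T}}^{[2]}, A_{\mathcal{T}}^{[2]}, W_{1:3})$ for every pair $|\mathcal{T}|=2$ yields $H_1 \geq (2/3)(L + H_2)$. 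Thus it suffices, in this specific setting, to establish the refined lower bound $H_2 \geq (4/11)L$.

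\textbf{Key step and main obstacle.} The naive bound $H_2 \geq L/3$ uses only the single-server privacy between retrievals of $W_3$ and $W_2$. To squeeze out $4L/11$ instead, my plan is to track the joint transcript across all three retrievals at once: decompose $H(A_{1:3}^{[3]} \mid Q_{1:3}^{[3]}, W_1, W_2)$ into single-server contributions via Han's inequality, match each single-server conditional entropy to an answer entropy already arising in the $W_2$-retrieval analysis through the $T_2=1$ privacy identity $(Q_n^{[2]}, A_n^{[2]}, W_{1:3}) \sim (Q_n^{[3]}, A_n^{[3]}, W_{1:3})$, and then combine these with the submodular relations among the pairwise entropies $H(A_{\mathcal{T}}^{[k]} \mid Q_{\mathcal{T}}^{[k]}, \cdots)$ that already appear in the derivation of $H_1 \geq (2/3)(L+H_2)$. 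The main obstacle is identifying the exact linear combination of Shannon-type inequalities that closes the $1/3 \to 4/11$ gap: the improvement is quantitatively small, so the tightening must exploit the specific numerology $N = K_2 = 3$ and $T_1 - T_2 = 1$. The most systematic way to hunt for the right combination is to pose the entire constraint system—correctness, both privacy identities, server symmetrization, and basic Shannon inequalities—as a linear program on the relevant conditional entropies, verify computationally that $21/11$ is the optimum, and then reverse-engineer a human-readable dual certificate from the optimizer.
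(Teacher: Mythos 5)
Your arithmetic is internally consistent (proving $H_2 \geq \tfrac{4}{11}L$ would indeed push $D$ from $\tfrac{17}{9}L$ to $\tfrac{21}{11}L$ through the chain $D\geq L+H_1\geq L+\tfrac{2}{3}(L+H_2)$), but this key inequality is asserted rather than proved: the ``key step'' paragraph is a research plan (pose an LP, extract a dual certificate) rather than a proof, so the argument as written is incomplete. This is the main gap.

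There is also a structural concern that makes me doubt the target $H_2\geq \tfrac{4}{11}L$ is the right quantity to localize the improvement. The paper's proof is not a linear chain terminating in a bound on $H_2 = H(A_{1:3}^{[2]}\mid Q_{1:3}^{[2]},W_{1:2})$. Instead it is a self-referential bound on $D' := \sum_n H(A_n^{[1]}\mid Q_{1:3}^{[1]})$: one lower-bounds $H(A_{1:3}^{[1]}\mid Q_{1:3}^{[1]},W_1)$ by $\tfrac{1}{2}H(A_n^{[2]}\mid Q_n^{[2]},W_1)+\tfrac{2}{3}L$ and symmetrically for $W_2$, then couples the two cross-retrieval single-server terms via a submodular inequality $H(A_n^{[1]}\mid Q_n^{[1]},W_2)+H(A_n^{[2]}\mid Q_n^{[2]},W_1)\geq H(A_n^{[1]}\mid Q_{1:3}^{[1]})+\tfrac{1}{3}L$, so that $D'$ reappears on the right. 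Solving $\tfrac{11}{6}D'\geq \tfrac{7}{2}L$ gives $D'\geq\tfrac{21}{11}L$. Once you have passed from $D'$ to $L+H_1$ via subadditivity (your first step), the per-server quantities $H(A_n^{[\cdot]}\mid Q_n^{[\cdot]},W_\cdot)$ that drive this fixed point are no longer in view, so the extra $\tfrac{2}{99}L$ of slack cannot be recovered by strengthening Lemma~\ref{lem:iter} alone — Lemma~\ref{lem:iter} applied to $H_2$ only gives $H_2\geq\tfrac13 L+\tfrac13 H_3$, and $H_3$ can be zero. In short: the improvement is an emergent property of the coupled system, not attributable to a single scalar in your chain, and the LP dual — if you did extract it — would look like the paper's fixed-point certificate rather than the decomposition you have committed to.
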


The proof of this proposition is given in Section \ref{sec:upper-bound}, which is obtained using the computer-aided approach discussed in \cite{tian2018symmetry,tian2019open, tian2020computational}. 

To further understand these bounds in Theorem \ref{thm:main}, define 
$$\underline{D} = \overline{R}^{-1}, \quad D_{\NS} = R_{\NS}^{-1}, \quad D_{\NB} = R_{\NB}^{-1}.$$

Three observations are in order:
\begin{enumerate}
\item Theorem \ref{thm:main} gives that
\begin{align}
\underline{D} \leq \min D \leq \min\left( D_{\NS}, D_{\NB} \right). \label{eqn:download-bounds}
\end{align}
The difference between $\underline{D}$ and $D_{\NS}$ is
\begin{align}
D_{\NS} - \underline{D} = \frac{T_1 - T_2}{N} \left( \frac{T_1}{N} \right)^{K_1-1} D^*(K_2-K_1, T_2).
\end{align}
It is seen that this gap diminishes geometrically as $K_1$ grows, and also vanishes when $T_1=T_2$ as expected.

\item Any $(N, T_1:K_2, T_2:K_2)$ code, i.e., a $T_1$-private code with $N$ servers and $K_2$ messages, is valid for the $(N, T_1:K_1, T_2:K_2)$ PIR system. The optimal download cost of the former is exactly given by  $D_{\text{T-PIR}} = D_N^*(K_2, T_1)$. Comparing with this naive approach, the coding gain of the proposed NS scheme is thus
\begin{align}
D_{\text{T-PIR}} - D_{\NS} = \left(\frac{T_1}{N}\right)^{K_1} \left( D_N^*(K_2-K_1, T_1) - D_N^{*}(K_2-K_1, T_2) \right),
\end{align} 
which is always non-negative, and strictly positive if and only if $K_2-K_1 \geq 2$. Note that the strategy of using an $(N, T_1:K_2, T_2:K_2)$ code when a message in $W_{\mathcal{S}}$ is requested, and using an $(N, T_1:1, T_2:K_2)$ code for the other messages is not valid, since this would lead to privacy leakage in the latter case, i.e., leaking the information that the requested message is not in the set $\mathcal{S}$.

\item 
The relation between $R_{\NS}$ and $R_{\NB}$ is as follows. 
\begin{itemize} 
\item For the cases when
\begin{align}
D_N^*(K_1, T_1) \geq  D_{N}^{*}(K_2-K_1, T_2) \quad \text{and} \quad \frac{T_2}{N} < \left(\frac{T_1}{N}\right)^{K_1},
\end{align}
the lower bound $R_{\NB}$ is better
\begin{align}
R_{\NS} < R_{\NB} = \left(D_N^*(K_1, T_1) + \frac{T_2}{N} D_N^*(K_2-K_1, T_2) \right)^{-1};
\end{align}
\item For the cases when
\begin{align}
D_N^*(K_1, T_1) < D_{N}^{*}(K_2-K_1, T_2) \quad \text{and} \quad \frac{D_N^*(K_1, T_1)}{1 - \left(\frac{T_1}{N}\right)^{K_1}} >  \frac{D_{N}^{*}(K_2-K_1, T_2)}{1 - \frac{T_2}{N}}
\end{align}
the lower bound $R_{\NB}$ is also better
\begin{align}
R_{\NS} < R_{\NB} = \left(D_N^*(K_2-K_1, T_2) + \frac{T_2}{N} D_N^*(K_1, T_1) \right)^{-1};
\end{align}
\item For all the other cases, the lower bound $R_{\NS}$ is better, i.e., $R_{\NB} \leq R_{\NS}$.

\end{itemize}
\end{enumerate}

The upper bound and lower bounds are shown in Figure \ref{fig:bounds}. In Figure \ref{fig:K1}, the gap between the upper bound $\overline{R}$ and the rate of NS coding scheme $R_{\NS}$ deminishes geometrically as $K_1$ grows. It can be seen in Figure \ref{fig:T1}, that when $T_1$ is close to $T_2$, the NS scheme performs better, and matches the upper bound if $T_1 = T_2$; when $T_1$ is close to $N$, the NB scheme performs better, and in this case matches the upper bound if $T_1 = N$.

\begin{figure}
\hfill
\subfigure[$(10, 6:K_1, 2:K_1+4)$ two-level PIR \label{fig:K1}]{\includegraphics[scale=0.52]{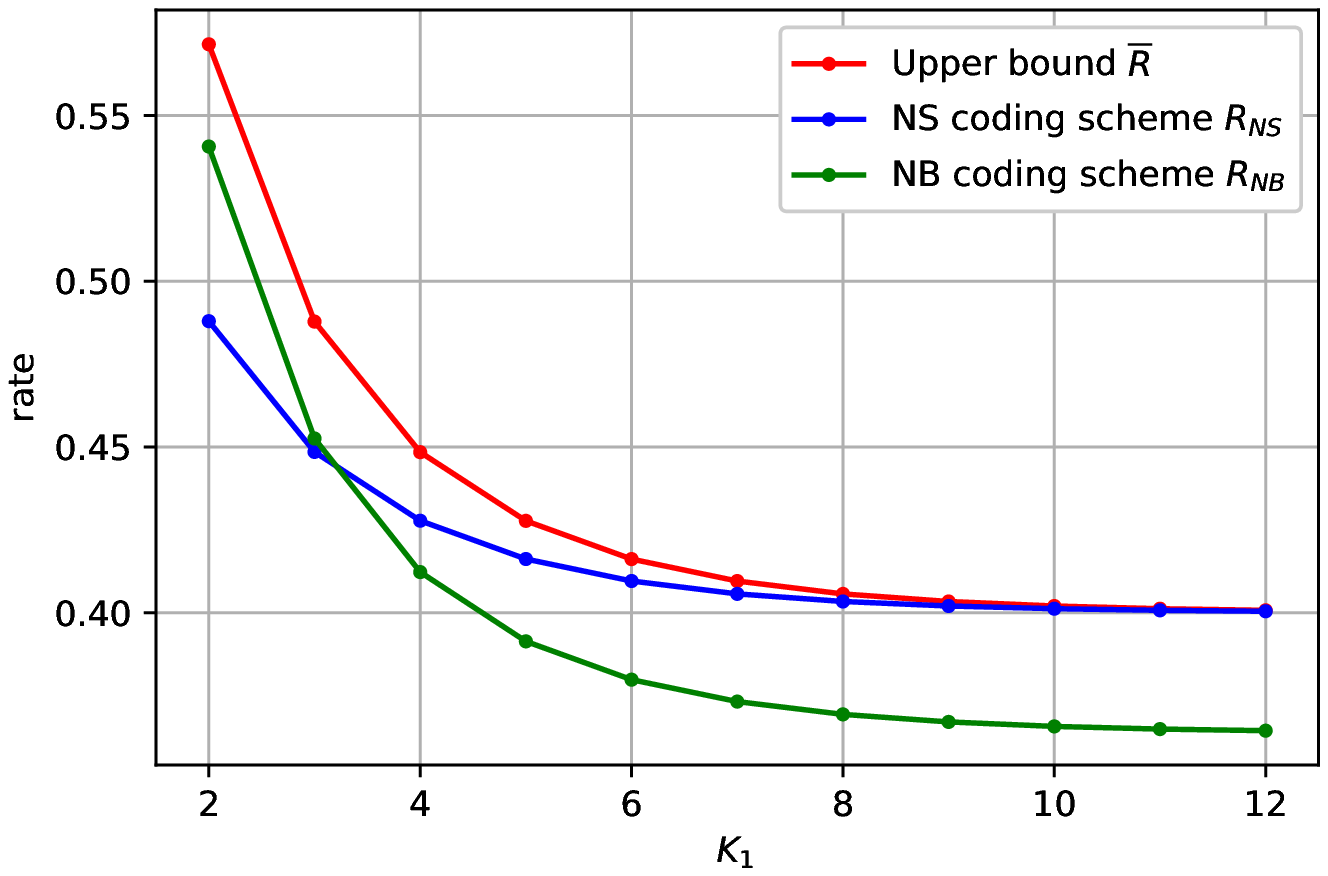}}
\hfill
\subfigure[$(10, T_1:2, 2:6)$ two-level PIR \label{fig:T1}]{\includegraphics[scale=0.52]{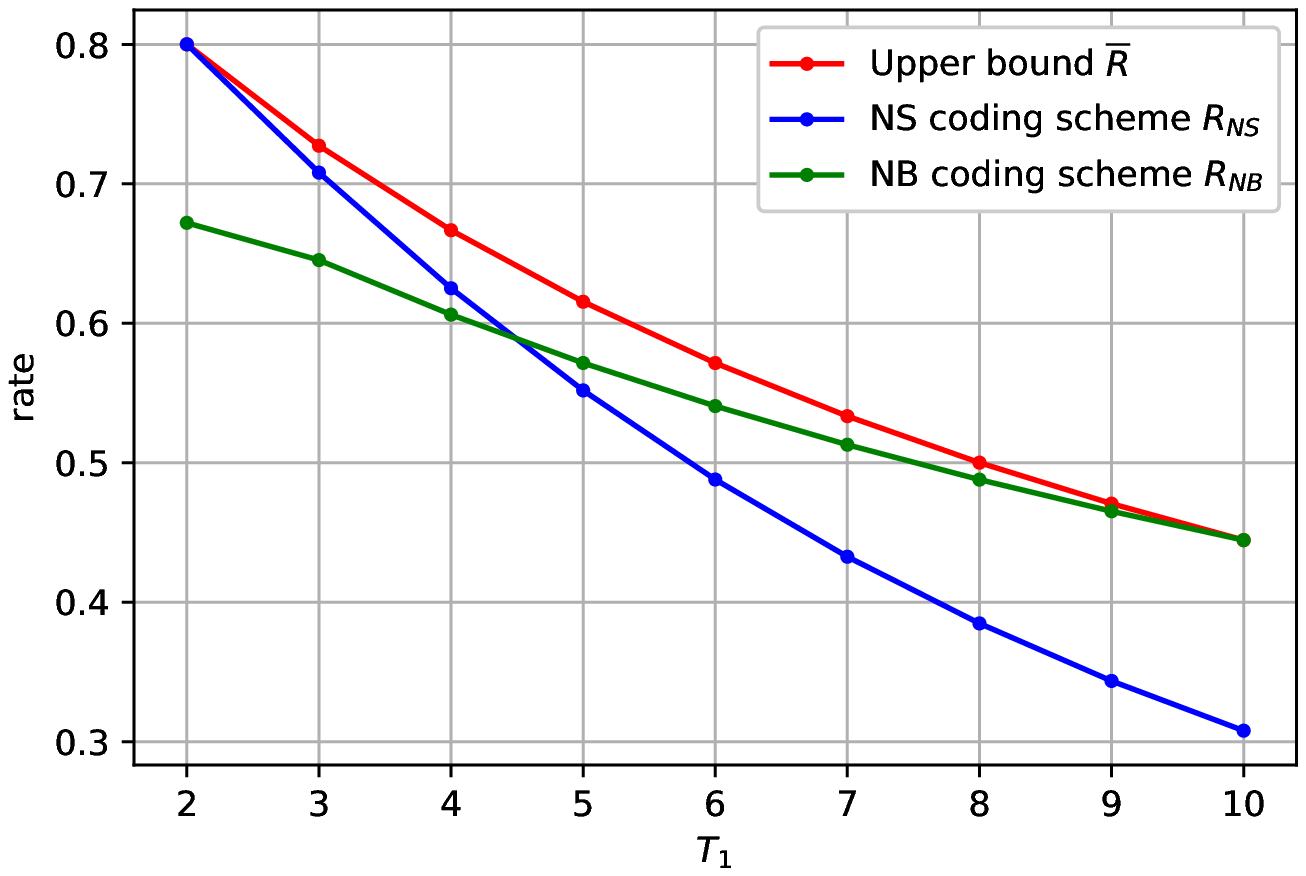}}
\hfill
\caption{Upper and lower bounds on the capacity of two-level PIR system}\label{fig:bounds}
\end{figure}

\section{A Gentle Start} 

In this section, we first provide a brief review of the $T$-colluding PIR code using two example cases, and partly based on insights obtained from these example cases, we provide two example codes to illustrate the proposed coding schemes. 

\subsection{Two $T$-colluding PIR examples}\label{sec:review}

As mentioned earlier, an $(N, T_1:K_1, T_2:K_2)$ two-level PIR system degrades to a $T_1$-private system when $K_1 = K_2$, and thus it is expected that there is a connection between the code construction for the $T$-private systems and that for the 2-level PIR systems. The capacity of the $T$-private system was identified in \cite{sun2017capacity}. We next consider two special cases of the codes proposed there, in order to provide the necessary intuition for the proposed codes. 

\begin{table}[!tb]
\centering
\caption{A  $2$-private code with $(N,K)=(4,2)$} \label{tbl:eg-SJ-2}
\resizebox{0.5\textwidth}{!}{
 \begin{tabular}{|c|c|c|c|c|c|}\hline
\mbox{Server-1}&\mbox{Server-2}&\mbox{Server-3} &\mbox{Server-4}\\\hline
$a_1, b_1$ & $a_2, b_2$ & $a_3, b_3$ & $a_4, b_4$\\ \hline \hline
$a_5+b_5$ & $a_6+b_6$ & $a_7+b_7$ & $a_8+b_8$ \\ \hline
\end{tabular}
}
\end{table}

\begin{enumerate}
\item First set $(N,T_1:K_1,T_2:K_2)=(4,2:2,1:2)$, which is essentially a $2$-private system with $N=4$ servers and $K=K_1=K_2=2$ messages. In the code given in \cite{sun2017capacity}, the message length is $8$. The messages are first precoded as $W_1^* = S_1 W_1$ and $W_2^* = S_2 W_2$, where $S_1$ and $S_2$ are random matrices drawn uniformly from the set of all $8 \times 8$ full-rank matrices over $\Fb_q$. Let $a_{1:8}$ and $b_{1:8}$ be MDS-coded symbols of messages $W^*_1$ and $W^*_2$, respectively, using appropriate coding parameters. The coding structure is given in Table \ref{tbl:eg-SJ-2}. To retrieve $W_1$, we choose $a_{1:8} = W_1^*$ and $b_{1:8}$ to be $(8, 4)$-MDS coded symbols using any $4$ symbols from $W_2^*$; the coding parameters for retrieving $W_2$ are obvious by symmetry. Since the symbols $b_{5:8}$ can be recovered from $b_{1:4}$, $W_1$ can be recovered correctly. It is not difficult to verify that the retrieval is private due to the precoding and MDS-coding steps.

\item Next let $(N,T_1:K_1,T_2:K_2)=(4,1:2,1:2)$, which is essentially the canonical PIR system with $N=4$ servers and $K=K_1=K_2=2$ messages. The coding structure is given in Table \ref{tbl:eg-SJ-1}, where $c_{1:16}$ and $d_{1:16}$ are MDS-coded symbols of two messages, respectively, with appropriate coding parameters. To retrieve $W_1$, we can use $c_{1:16} = W_1^*$ and let $d_{1:16}$ be $(16, 4)$-MDS coded of any $4$ symbols in $W_2^*$; the coding parameters for retrieving $W_2$ are obvious by symmetry. Since the symbols $d_{5:16}$ can be recovered from $d_{1:4}$, $W_1$ can be recovered correctly and privately. 

\end{enumerate}
\begin{table}[!tb]
\centering
\caption{A  $1$-private code with $(N,K)=(4,2)$} \label{tbl:eg-SJ-1}
\resizebox{0.5\textwidth}{!}{
 \begin{tabular}{|c|c|c|c|c|c|}\hline
\mbox{Server-1}&\mbox{Server-2}&\mbox{Server-3} &\mbox{Server-4}\\\hline
$c_1, d_1$ & $c_2, d_2$ & $c_3, d_3$ & $c_4, d_4$ \\ \hline \hline
$c_5+d_5$ & $c_6+d_6$ & $c_7+d_7$ & $c_8+d_8$ \\
$c_9+d_9$ & $c_{10}+d_{10}$ & $c_{11}+d_{11}$ & $c_{12}+d_{12}$ \\
$c_{13}+d_{13}$ & $c_{14}+d_{14}$ & $c_{15}+d_{15}$ & $c_{16}+d_{16}$ \\ \hline
\end{tabular}
}
\end{table}

Comparing these two cases, a few observations are now in order:
\begin{enumerate}
\item The codes in Table \ref{tbl:eg-SJ-2}  and Table \ref{tbl:eg-SJ-1} have two layers: the first layer has single symbols, i.e., $a$, $b$, $c$,  or $d$, and the second layer has summations of two symbols, i.e.,  $a+b$ or $c+d$.
\item Although the $2$-private code meets the privacy requirement of the $1$-private system, the coding structure in Table \ref{tbl:eg-SJ-1} is more efficient. Particularly, the ratio between the first layer transmissions and second layer transmissions changes from $8:4$ to $8:12$. Placing more symbols in the second layer is preferable, because  one desired symbol essentially takes two symbol transmissions in the first layer, yet it takes only one in the second layer. 
\item The improved transmission ratio between the two layers is a consequence of the chosen MDS coding parameters for the non-requested message (i.e., the interference): in Table \ref{tbl:eg-SJ-2}, it is $(8,4)$ while in Table \ref{tbl:eg-SJ-1} it is $(16,4)$. These parameters, which are chosen to satisfy the decoding and privacy requirements, determine the number of symbols in different layers. 
\end{enumerate}
These observations suggest that in a two-level PIR system, we will need to adjust the MDS coding parameters for different messages according to their privacy levels, but maintain the code structure consistent between the two cases when retrieving two types of messages. This is a considerable generalization of the $T$-private setting, since  in the $T$-private setting the MDS coding parameters can be chosen uniformly for all the messages, except the requested message, while in our setting, the privacy levels create heterogeneity among the messages.  

In the code construction given in \cite{sun2017capacity}, the following lemma plays an instrumental role in formally showing the privacy condition to hold, which we shall also utilize in this work. 
\begin{lemma}[Statistical effect of full rank matrices \cite{sun2017capacity}] \label{lem:equivlent}
Let $S_1, S_2, \ldots, S_K \in \Fb_q^{\alpha \times \alpha}$ be $K$ random matrices, drawn independently and uniformly from all $\alpha \times \alpha$ full-rank matrices over $\Fb_q$. Let $G_1, G_2, \ldots, G_K \in \Fb_q^{\beta \times \beta}$ be $K$ invertible square matrices of dimension $\beta \times \beta$ over $\Fb_q$. Let $\Ic_1,\Ic_2,\ldots ,\Ic_K \in \Nb^{\beta \times 1}$ be $K$ index vectors, each containing $\beta$ distinct indices from $[1: \alpha]$. Then
\begin{align}
(G_1 S_1[\Ic_1, :], G_2 S_2[\Ic_2, :], \ldots, G_K S_K[\Ic_K, :]) \sim (S_1[1:\beta, :], S_2[1:\beta, :], \ldots, S_K[1:\beta, :]),
\end{align}
where the notation $S[\Ic,:]$ is used to indicate the submatrix of $S$ by taking its rows in $\Ic$.
\end{lemma}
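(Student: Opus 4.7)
The plan is to reduce the joint statement to $K$ independent single-block claims, and then for each block show that (i) taking $\beta$ distinct rows of a uniformly random full-rank matrix is distributionally the same as taking its first $\beta$ rows, and (ii) left-multiplying that submatrix by the fixed invertible $G_k$ preserves this distribution, which turns out to be uniform on the set of full-row-rank $\beta \times \alpha$ matrices.

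Since $S_1,\ldots,S_K$ are mutually independent and the $G_k,\Ic_k$ are deterministic, both sides of the claimed equivalence factor into a product of $K$ marginals, so it suffices to prove for each $k$ that $G_k S_k[\Ic_k,:] \sim S_k[1:\beta,:]$. For this I would first observe that for any permutation matrix $P \in \Fb_q^{\alpha \times \alpha}$, the map $S \mapsto PS$ is a bijection on the set of invertible $\alpha \times \alpha$ matrices, so $P S_k$ has the same uniform distribution on that set as $S_k$. Choosing a $P$ that sends the rows indexed by $\Ic_k$ to positions $1,\ldots,\beta$ immediately yields $S_k[\Ic_k,:] \sim S_k[1:\beta,:]$.

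Next I would identify this marginal distribution explicitly. For any target $M \in \Fb_q^{\beta \times \alpha}$ of full row rank $\beta$, the number of completions of $M$ to a full-rank $\alpha \times \alpha$ matrix is $\prod_{i=\beta}^{\alpha-1}(q^\alpha - q^i)$, a quantity depending only on $q,\alpha,\beta$ and not on $M$. Hence $S_k[1:\beta,:]$ is uniform on the set $\Mc$ of full-row-rank $\beta \times \alpha$ matrices over $\Fb_q$. Because $G_k$ is invertible, $M \mapsto G_k M$ preserves rank in both directions and is injective on $\Fb_q^{\beta \times \alpha}$, hence restricts to a bijection of $\Mc$ onto itself; uniform distributions are invariant under bijections, so $G_k S_k[1:\beta,:] \sim S_k[1:\beta,:]$. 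Composing with the previous step gives $G_k S_k[\Ic_k,:] \sim S_k[1:\beta,:]$, and taking the product over $k$ finishes the proof.

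The main (and really only) delicate point is the counting step that makes the marginal of $S_k[1:\beta,:]$ uniform on $\Mc$: one must argue that the number of valid completions does not depend on the particular rank-$\beta$ matrix $M$, which follows because any $\beta$-dimensional subspace of $\Fb_q^\alpha$ admits the same count of extensions to an ordered basis. Everything else is a clean application of row-permutation symmetry plus the bijection property of invertible linear maps on full-row-rank matrices.
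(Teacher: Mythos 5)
The paper does not supply its own proof of this lemma; it is imported verbatim from Sun and Jafar's work on $T$-colluding PIR with only a citation, so there is no in-paper argument to compare against. Your proof is correct as a self-contained derivation. The factorization into $K$ independent marginals is justified because the $S_k$ are mutually independent and the $G_k$, $\Ic_k$ are deterministic. The row-permutation step is clean: choosing $P$ so that $(PS_k)[1:\beta,:] = S_k[\Ic_k,:]$, combined with $PS_k \sim S_k$ (left-multiplication by a fixed invertible matrix being a bijection on $\mathrm{GL}(\alpha,q)$), gives $S_k[\Ic_k,:] \sim S_k[1:\beta,:]$. The counting step is also right: any full-row-rank $M \in \Fb_q^{\beta\times\alpha}$ extends to an invertible $\alpha\times\alpha$ matrix in exactly $\prod_{i=\beta}^{\alpha-1}(q^\alpha - q^i)$ ways, a quantity independent of $M$, so $S_k[1:\beta,:]$ is uniform on the set $\Mc$ of full-row-rank $\beta\times\alpha$ matrices; and $M \mapsto G_k M$ restricts to a bijection of $\Mc$, which preserves uniformity. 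This matches the spirit of Sun and Jafar's original argument, which also reduces to identifying the marginal as uniform over full-row-rank matrices and invoking the bijection property of invertible transformations. The one presentational nuance worth flagging is that the lemma's $\Ic_k$ are ordered index vectors, so the permutation $P$ must respect that ordering when mapping $\Ic_k$ onto $1,\ldots,\beta$; your phrasing implicitly does this, and the argument is unaffected.
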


\subsection{An example of the NS scheme} \label{sec:eg-NS}
We next provide an example to illustrate the proposed NS coding scheme.  In this example, the two-level PIR system is specified by the parameters $(N, T_1: K_1, T_2:K_2)=(4,2:2,1:4)$, i.e., there are $4$ servers and $4$ messages $W_{1:4}$, and messages $W_{1:2}$ have privacy level $T_1=2$, while all messages $W_{1:4}$ have privacy level $T_2=1$. The length of each message is $L=64$ here. 

\begin{table}[h]
\caption{NS Scheme in $(N, T_1: K_1, T_2:K_2) = (4, 2:2, 1:4)$  for retrieving $W_1$} \label{tbl:eg-NS-a}
\resizebox{1.01\textwidth}{!}{
 \begin{tabular}{|c||c|c|c|c|c|}\hline
\mbox{Coding group}&\mbox{Server-1}&\mbox{Server-2}&\mbox{Server-3} & \mbox{Server-4}\\
\hline
$a$: (64, 64) & $a_1, a_2, a_3$ & $a_4, a_5, a_6$ & $a_7, a_8, a_9$ & $a_{10}, a_{11}, a_{12}$ \\ 
\hline
\textcolor{Mulberry}{$b$}: (24, 12)&\textcolor{Mulberry}{$b_1, b_2, b_3$} & \textcolor{Mulberry}{$b_4, b_5, b_6$} & \textcolor{Mulberry}{$b_7, b_8, b_9$} & \textcolor{Mulberry}{$b_{10}, b_{11}, b_{12}$} \\ 
\textcolor{blue}{$c$}: (8, 4) & \textcolor{blue}{$c_1$} & \textcolor{blue}{$c_2$}& \textcolor{blue}{$c_3$} & \textcolor{blue}{$c_4$}\\
\textcolor{green}{$d$}: (8, 4) & \textcolor{green}{$d_1$} & \textcolor{green}{$d_2$} & \textcolor{green}{$d_3$} & \textcolor{green}{$d_4$}\\
\hline\hline
& $a_{13} + \textcolor{Mulberry}{b_{13}}$ & $a_{14} + \textcolor{Mulberry}{b_{14}}$ & $a_{15} + \textcolor{Mulberry}{b_{15}}$ & $a_{16} + \textcolor{Mulberry}{b_{16}}$ \\
& $a_{17} + \textcolor{Mulberry}{b_{17}}$ &  $a_{18} + \textcolor{Mulberry}{b_{18}}$ &  $a_{19} + \textcolor{Mulberry}{b_{19}}$ & $a_{20} + \textcolor{Mulberry}{b_{20}}$ \\
& $a_{21} + \textcolor{Mulberry}{b_{21}}$ & $a_{22} + \textcolor{Mulberry}{b_{22}}$  & $a_{23} + \textcolor{Mulberry}{b_{23}}$  & $a_{24} + \textcolor{Mulberry}{b_{24}}$ \\ 
& $a_{25}+ \textcolor{blue}{c_{5}}$ & $a_{26}+ \textcolor{blue}{c_{6}}$ & $a_{27}+ \textcolor{blue}{c_{7}}$ & $a_{28}+ \textcolor{blue}{c_{8}}$ \\
& $a_{29}+ \textcolor{green}{d_{5}}$ & $a_{30}+ \textcolor{green}{d_{6}}$ & $a_{31}+ \textcolor{green}{d_{7}}$ & $a_{32}+ \textcolor{green}{d_{8}}$ \\
\hline 
\textcolor{red}{$b,c$}: (8,4) & $\textcolor{red}{b_{25}+c_{9}}$ & $\textcolor{red}{b_{26}+c_{10}}$ & $\textcolor{red}{b_{27}+c_{11}}$ & $\textcolor{red}{b_{28}+c_{12}}$ \\
\textcolor{cyan}{$b,d$}: (8,4) & $\textcolor{cyan}{b_{29}+d_{9}}$ & $\textcolor{cyan}{b_{30}+d_{10}}$ & $\textcolor{cyan}{b_{31}+d_{11}}$ & $\textcolor{cyan}{b_{32}+d_{12}}$ \\
\textcolor{Dandelion}{$c,d$}: (24, 12)& $\textcolor{Dandelion}{c_{13}+d_{13}}$ & $\textcolor{Dandelion}{c_{16}+d_{16}}$ & $\textcolor{Dandelion}{c_{19}+d_{19}}$ & $\textcolor{Dandelion}{c_{22}+d_{22}}$ \\
 & $\textcolor{Dandelion}{c_{14}+d_{14}}$ & $\textcolor{Dandelion}{c_{17}+d_{17}}$ & $\textcolor{Dandelion}{c_{20}+d_{20}}$ & $\textcolor{Dandelion}{c_{23}+d_{23}}$ \\
& $\textcolor{Dandelion}{c_{15}+d_{15}}$ & $\textcolor{Dandelion}{c_{18}+d_{18}}$ & $\textcolor{Dandelion}{c_{21}+d_{21}}$ & $\textcolor{Dandelion}{c_{24}+d_{24}}$ \\
\hline\hline
& $a_{33}+ \textcolor{red}{b_{33}+c_{25}}$ & $a_{34}+ \textcolor{red}{b_{34}+c_{26}}$ & $a_{35}+\textcolor{red}{b_{35}+c_{27}}$ & $a_{36}+ \textcolor{red}{b_{36}+c_{28}}$ \\
& $a_{37}+ \textcolor{cyan}{b_{37}+d_{25}}$ & $a_{38}+ \textcolor{cyan}{b_{38}+d_{26}}$ & $a_{39}+ \textcolor{cyan}{b_{39}+d_{27}}$ & $a_{40}+ \textcolor{cyan}{b_{40}+d_{28}}$ \\
& $a_{41}+\textcolor{Dandelion}{c_{29}+d_{29}}$ & $a_{42}+ \textcolor{Dandelion}{c_{30}+d_{30}}$ & $a_{43}+ \textcolor{Dandelion}{c_{31}+d_{31}}$ & $a_{44}+ \textcolor{Dandelion}{c_{32}+d_{32}}$ \\
& $a_{45}+ \textcolor{Dandelion}{c_{33}+d_{33}}$ & $a_{46}+ \textcolor{Dandelion}{c_{34}+d_{34}}$ & $a_{47}+ \textcolor{Dandelion}{c_{35}+d_{35}}$ & $a_{48}+ \textcolor{Dandelion}{c_{36}+d_{36}}$ \\
& $a_{49}+ \textcolor{Dandelion}{c_{37}+d_{38}}$ & $a_{50}+ \textcolor{Dandelion}{c_{38}+d_{38}}$ & $a_{51}+ \textcolor{Dandelion}{c_{39}+d_{39}}$ & $a_{52}+ \textcolor{Dandelion}{c_{40}+d_{40}}$ \\ 
\hline
\textcolor{magenta}{$b,c,d$}: (24, 12)& $\textcolor{magenta}{b_{41}+c_{41}+d_{41}}$ & $\textcolor{magenta}{b_{42}+c_{42}+d_{42}}$ & $\textcolor{magenta}{b_{43}+c_{43}+d_{43}}$ & $\textcolor{magenta}{b_{44}+c_{44}+d_{44}}$ \\
 & $\textcolor{magenta}{b_{45}+c_{45}+d_{45}}$ & $\textcolor{magenta}{b_{46}+c_{46}+d_{46}}$ & $\textcolor{magenta}{b_{47}+c_{47}+d_{47}}$ & $\textcolor{magenta}{b_{48}+c_{48}+d_{48}}$ \\
& $\textcolor{magenta}{b_{49}+c_{49}+d_{49}}$ & $\textcolor{magenta}{b_{50}+c_{50}+d_{50}}$ & $\textcolor{magenta}{b_{51}+c_{51}+d_{51}}$ & $\textcolor{magenta}{b_{52}+c_{52}+d_{52}}$ \\
\hline\hline
& $a_{53}+ \textcolor{magenta}{b_{53}+c_{53}+d_{53}}$ & $a_{54}+ \textcolor{magenta}{b_{54}+c_{54}+d_{54}}$ & $a_{55}+ \textcolor{magenta}{b_{55}+c_{55}+d_{55}}$ & $a_{56}+ \textcolor{magenta}{b_{56}+c_{56}+d_{56}}$ \\
& $a_{57}+ \textcolor{magenta}{b_{57}+c_{57}+d_{57}}$ & $a_{58}+ \textcolor{magenta}{b_{58}+c_{58}+d_{58}}$ & $a_{59}+ \textcolor{magenta}{b_{59}+c_{59}+d_{59}}$ & $a_{60}+ \textcolor{magenta}{b_{60}+c_{60}+d_{60}}$\\
& $a_{61}+ \textcolor{magenta}{b_{61}+c_{61}+d_{61}}$ & $a_{62}+ \textcolor{magenta}{b_{62}+c_{62}+d_{62}}$ & $a_{63}+ \textcolor{magenta}{b_{63}+c_{63}+d_{63}}$ & $a_{64}+ \textcolor{magenta}{b_{64}+c_{64}+d_{64}}$ \\
\hline
\end{tabular}}
\end{table}

\begin{table}[h]
\caption{NS Scheme in $(N, T_1: K_1, T_2:K_2) = (4, 2:2, 1:4)$ for retrieving $W_4$} \label{tbl:eg-NS-d}
\resizebox{1.01\textwidth}{!}{
 \begin{tabular}{|c||c|c|c|c|c|}\hline
Coding group&\mbox{Server-1}&\mbox{Server-2}&\mbox{Server-3} & \mbox{Server-4}\\\hline
$d$: (64, 64)& $d_1$ & $d_2$ & $d_3$ & $d_4$ \\
\hline
\textcolor{Mulberry}{$a$}: (16, 4)& $\textcolor{Mulberry}{a_1, a_2, a_3}$ & $\textcolor{Mulberry}{a_4, a_5, a_6}$ & $\textcolor{Mulberry}{a_7, a_8, a_9}$ & $\textcolor{Mulberry}{a_{10}, a_{11}, a_{12}}$ \\ 
\textcolor{blue}{$b$}: (16, 4)& $\textcolor{blue}{b_1, b_2, b_3}$ & $\textcolor{blue}{b_4, b_5, b_6}$ & $\textcolor{blue}{b_7, b_8, b_9}$ & $\textcolor{blue}{b_{10}, b_{11}, b_{12}}$ \\ 
\textcolor{green}{$c$}: (16, 4) & $\textcolor{green}{c_1}$ & $\textcolor{green}{c_2}$ & $\textcolor{green}{c_3}$ & $\textcolor{green}{c_4}$ \\
\hline \hline
& $\textcolor{Mulberry}{a_{29}}+d_{5}$ & $\textcolor{Mulberry}{a_{30}}+d_{6}$ & $\textcolor{Mulberry}{a_{31}}+d_{7}$ & $\textcolor{Mulberry}{a_{32}}+d_{8}$ \\ 
& $\textcolor{blue}{b_{29}}+d_{9}$ & $\textcolor{blue}{b_{30}}+d_{10}$ & $\textcolor{blue}{b_{31}}+d_{11}$ & $\textcolor{blue}{b_{32}}+d_{12}$ \\
& $\textcolor{green}{c_{13}}+d_{13}$ & $\textcolor{green}{c_{16}}+d_{16}$ & $\textcolor{green}{c_{19}}+d_{19}$ & $\textcolor{green}{c_{22}}+d_{22}$ \\
& $\textcolor{green}{c_{14}}+d_{14}$ & $\textcolor{green}{c_{17}}+d_{17}$ & $\textcolor{green}{c_{20}}+d_{20}$ & $\textcolor{green}{c_{23}}+d_{23}$ \\
& $\textcolor{green}{c_{15}}+d_{15}$ & $\textcolor{green}{c_{18}}+d_{18}$ & $\textcolor{green}{c_{21}}+d_{21}$ & $\textcolor{green}{c_{24}}+d_{24}$ \\
\hline
$\textcolor{red}{a,b}: (16, 4)$ & $\textcolor{red}{a_{13} +b_{13}}$ & $\textcolor{red}{a_{14} +b_{14}}$ & $\textcolor{red}{a_{15} +b_{15}}$ & $\textcolor{red}{a_{16} +b_{16}}$ \\
& $\textcolor{red}{a_{17} +b_{17}}$ & $\textcolor{red}{a_{18} +b_{18}}$ & $\textcolor{red}{a_{19} +b_{19}}$ & $\textcolor{red}{a_{20} +b_{20}}$ \\
& $\textcolor{red}{a_{21} +b_{21}}$ & $\textcolor{red}{a_{22} +b_{22}}$  & $\textcolor{red}{a_{23} +b_{23}}$  & $\textcolor{red}{a_{24} +b_{24}}$ \\ 
\textcolor{cyan}{$a,c$}: (16,4) & $\textcolor{cyan}{a_{25} +c_{5}}$ & $\textcolor{cyan}{a_{26} +c_{6}}$ & $\textcolor{cyan}{a_{27} +c_{7}}$ & $\textcolor{cyan}{a_{28}+c_{8}}$ \\
\textcolor{Dandelion}{$b,c$}: (16,4) & $\textcolor{Dandelion}{b_{25}+c_{9}}$ & $\textcolor{Dandelion}{b_{26}+c_{10}}$ & $\textcolor{Dandelion}{b_{27}+c_{11}}$ & $\textcolor{Dandelion}{b_{28}+c_{12}}$ \\
\hline\hline
& $\textcolor{red}{a_{37}+b_{37}}+d_{25}$ & $\textcolor{red}{a_{38}+b_{38}}+d_{26}$ & $\textcolor{red}{a_{39}+b_{39}}+d_{27}$ & $\textcolor{red}{a_{40}+b_{40}}+d_{28}$ \\
& $\textcolor{cyan}{a_{41}+c_{29}}+d_{29}$ & $\textcolor{cyan}{a_{42}+c_{30}}+d_{30}$ & $\textcolor{cyan}{a_{43}+c_{31}}+d_{31}$ & $\textcolor{cyan}{a_{44}+c_{32}}+d_{32}$ \\
& $\textcolor{cyan}{a_{45}+c_{33}}+d_{33}$ & $\textcolor{cyan}{a_{46}+c_{34}}+d_{34}$ & $\textcolor{cyan}{a_{47}+c_{35}}+d_{35}$ & $\textcolor{cyan}{a_{48}+c_{36}}+d_{36}$ \\
& $\textcolor{cyan}{a_{49}+c_{37}}+d_{38}$ & $\textcolor{cyan}{a_{50}+c_{38}}+d_{38}$ & $\textcolor{cyan}{a_{51}+c_{39}}+d_{39}$ & $\textcolor{cyan}{a_{52}+c_{40}}+d_{40}$ \\
& $\textcolor{Dandelion}{b_{41}+c_{41}}+d_{41}$ & $\textcolor{Dandelion}{b_{42}+c_{42}}+d_{42} $ & $\textcolor{Dandelion}{b_{43}+c_{43}}+d_{43}$ & $\textcolor{Dandelion}{b_{44}+c_{44}}+d_{44}$ \\
& $\textcolor{Dandelion}{b_{45}+c_{45}}+d_{45}$ & $\textcolor{Dandelion}{b_{46}+c_{46}}+d_{46} $& $\textcolor{Dandelion}{b_{47}+c_{47}}+d_{47}$ & $\textcolor{Dandelion}{b_{48}+c_{48}}+d_{48}$ \\
& $\textcolor{Dandelion}{b_{49}+c_{49}}+d_{49}$ & $\textcolor{Dandelion}{b_{50}+c_{50}}+d_{50}$ & $\textcolor{Dandelion}{b_{51}+c_{51}}+d_{51}$ & $\textcolor{Dandelion}{b_{52}+c_{52}}+d_{52}$ \\
\hline
\textcolor{magenta}{$a,b,c$}: (16,4) & $\textcolor{magenta}{a_{33}+b_{33}+c_{25}}$ & $\textcolor{magenta}{a_{34}+b_{34}+c_{26}}$ & $\textcolor{magenta}{a_{35}+b_{35}+c_{27}}$ & $\textcolor{magenta}{a_{36}+b_{36}+c_{28}}$ \\
\hline \hline
& $\textcolor{magenta}{a_{53}+b_{53}+c_{53}}+d_{53}$ & $\textcolor{magenta}{a_{54}+b_{54}+c_{54}}+d_{54}$ & $\textcolor{magenta}{a_{55}+b_{55}+c_{55}}+d_{55}$ & $\textcolor{magenta}{a_{56}+b_{56}+c_{56}}+d_{56}$ \\
& $\textcolor{magenta}{a_{57}+b_{57}+c_{57}}+d_{57}$ & $\textcolor{magenta}{a_{58}+b_{58}+c_{58}}+d_{58}$ & $\textcolor{magenta}{a_{59}+b_{59}+c_{59}}+d_{59}$ & $\textcolor{magenta}{a_{60}+b_{60}+c_{60}}+d_{60}$ \\
& $\textcolor{magenta}{a_{61}+b_{61}+c_{61}}+d_{61}$ & $\textcolor{magenta}{a_{62}+b_{62}+c_{62}}+d_{62}$ & $\textcolor{magenta}{a_{63}+b_{63}+c_{63}}+d_{63}$ & $\textcolor{magenta}{a_{64}+b_{64}+c_{64}}+d_{64}$ \\
\hline
\end{tabular}}
\end{table}

\vspace{0.2cm}
\noindent\textbf{Encoding:} To retrieve a message, the answers are formed in three steps, and the queries are simply the encoding matrix for these answers. Assume for each $(n,k)$ pair where $n\geq k$, an MDS code in $\Fb_q$ is given and fixed, and we refer to it as the $(n,k)$ MDS code. The coding structure is illustrated in Table \ref{tbl:eg-NS-a} and Table \ref{tbl:eg-NS-d}, for the retrieval of $W_1$ and $W_4$, respectively.
The coding steps can be understood as follows:
\begin{enumerate}
\item \textit{Precoding:} Let $S_{1}, S_2, S_3$,  and $S_4$ be four random matrices, which are independently and uniformly drawn from the set of all $64 \times 64$ full rank matrices over $\Fb_q$; these matrices are known only to the user. The precoded messages $W^*_{1:4}$ are  
\begin{align}
W^*_1 = S_1 W_1; \quad W^*_2 = S_2 W_2; \quad W^*_3 = S_3 W_3; \quad W^*_4 = S_4 W_4.
\end{align}
\item \textit{Group-wise MDS coding:} The precoded messages are partitioned into non-overlapping segments, and each segment is MDS-coded under certain $(n,k)$ parameters, the result of which is referred to as a coding group. These MDS-coded symbols for the four messages are denoted as $a_{1:64}, b_{1:64}, c_{1:64}, d_{1:64}$, respectively. In the tables, these coding groups are distinguished using different colors, with the corresponding MDS parameters given in the first column. For example, the red coding groups in Table \ref{tbl:eg-NS-a} for both \textcolor{red}{$b_{25:28,33:36}$} and \textcolor{red}{$c_{9:12,25:28}$} are obtained by encoding $4$ pre-coded symbols in $W^*_2$ and $W^*_4$, respectively. In each coding group, the coded symbols are ordered and sequentially placed in the tables, indicated by their subscripts. 
\item \textit{Forming pre-coded message sums:} The summations of the MDS-coded messages are formed accordingly, which can be seen clearly from Table \ref{tbl:eg-NS-a} and Table \ref{tbl:eg-NS-d}.
\end{enumerate}

\vspace{0.2cm}
\noindent\textbf{Decoding and correctness:} The coding structure is layered, where in each layer the number of summands in each downloaded symbol is the same. From top to bottom, the number of summands increases from $1$ to $4$. The symbols of interference messages in each coding group are placed in two adjacent layers, where the signals (i.e., the summation of the symbols of interference messages) in the top layer can decode the interference signals in lower layer due to the common linear MDS code.

In Table \ref{tbl:eg-NS-a}, for each coding group, the total number of interference signals placed in two adjacent layers and the top layer follow the ratio $(2:1)=(8:4)=(24:12)$. For example, $8$ interference signals in the red coding group are placed in the second and third layers, where $4$ downloaded symbols $\textcolor{red}{b_{25:28} + c_{9:12}}$ in the second layer can decode $\textcolor{red}{b_{33:36} + c_{25:28}}$ in the third layer, because $\textcolor{red}{b,c}$ are encoded by the same linear $(8, 4)$ MDS code. Consequently, $a_{33:36}$ can be recovered. It can be verified that $a_{1:64}$ can all be recovered either directly or in this fashion.
By symmetry, $W_2$ can be retrieved similarly. 

In Table \ref{tbl:eg-NS-d}, for each coding group, the numbers of interference signals of each coding group placed in two adjacent layers and the top layer have the ratio at most $4:1$. For example, $16$ interference signals in red coding groups are placed in the second and third layers, where any $4$ of the $12$ downloaded symbols $\textcolor{red}{a_{13:24} + b_{13:24}}$ in the second layer can decode $\textcolor{red}{a_{37:40} + b_{37:40}}$ in the third layer because $\textcolor{red}{a,b}$ are encoded by the same linear $(16, 4)$ MDS code. Consequently, $d_{25:28}$ can be recovered. It can be verified that $d_{1:64}$ can all be recovered either directly, or in this fashion. By symmetry, $W_3$ can be retrieved similarly. 

\vspace{0.2cm}
\noindent\textbf{Privacy:} The coding pattern, i.e., the manner of forming pre-coded message sums, is the same for the retrieval of any message in $W_{1:4}$.
Since it is a linear code, the coded symbols can be generated by the corresponding coding matrices. From Table \ref{tbl:eg-NS-a}, it is seen that the coding matrix of the coded symbols of any message from any two servers has full row-rank. For examples, the coded symbols $a$'s in server-1 and server-2 can be generated by a full row rank coding matrix using the message $W_1$, due to the pre-coding and the group-wise MDS coding. By applying Lemma \ref{lem:equivlent}, the messages $W_{1:2}$ thus have privacy level $2$. The $1$-privacy for all the messages can be seen in a similar manner. 

\vspace{0.2cm}
\noindent\textbf{Performance:} The total number of downloaded symbols is $116$ and the message length is $64$. Thus the rate is $R_{\NS} = \frac{64}{116} = \frac{16}{29}$. The scheme for $2$-private systems has rate $\frac{8}{15} < R_{\NS}$.
\vspace{0.2cm}

\noindent\textit{Remark:} The construction resembles the scheme in \cite{sun2017capacity} (also discussed in Section \ref{sec:review}), but it allows non-uniform coding structure to leverage the requirements of two levels of privacy. 
Due to the homogeneity of the privacy requirements for all the messages in $T$-private systems, the MDS coding parameters for each coding group are chosen to be  $(N, T)$. In the proposed scheme for the $(N, T_1:K_1, T_2:K_2)$ system, there is symmetry among servers, and also symmetries among $W_{1:K_1}$ and among $W_{K_1+1:K_2}$ but not across all the messages. Thus when retrieving message $W_{k^*}$ with $k^* \in 1:K_1$, the ratio of the MDS parameters $(n, k)$ in each coding group of the undesired messages need to be chosen as $(N,T_1)$, while as for message $W_{k^*}$ with $k^* \in K_1+1:K_2$, the MDS coding parameters in each coding group would be $(N,T_2)$.
However, since $ N/T_1 < N/T_2$, with the same retrieval pattern, there exists certain slack in the placement pattern when retrieving $W_{k^*}$ with $k^* \in K_1+1:K_2$. For example, the red coding group in Table \ref{tbl:eg-NS-d} only needs $4$ symbols in layer-$2$ to decode the remaining symbols in both layer-$2$  and layer-$3$, yet $12$ symbols are retrieved and available directly in layer-$2$.

\subsection{An example of the NB scheme} \label{sec:eg-NB}
We provide an example to illustrate the proposed NB coding scheme for the same two-level PIR system specified by paramters $(N, T_1:K_1, T_2:K_2) = (4, 2:2, 1:4)$. The length of each message is again $L = 64$.

\begin{table}[h]
\centering
\caption{NB Scheme in $(N, T_1: K_1, T_2:K_2) = (4, 2:2, 1:4)$ for retrieving $W_1$} \label{tbl:eg-NB-a}
\resizebox{0.98\textwidth}{!}{
\begin{tabular}{|c||c|c|c|c|c|}\hline
Coding group&\mbox{Server-1}&\mbox{Server-2}&\mbox{Server-3} & \mbox{Server-4}\\
\hline
$a: (64, 64)$ & $a_1, \textcolor{red}{b_1}$ & $a_3, \textcolor{red}{b_3}$ & $a_5, \textcolor{red}{b_5}$ & $a_7, \textcolor{red}{b_7}$ \\
$\textcolor{red}{b}: (64, 32)$ & $a_2, \textcolor{red}{b_2}$ & $a_4, \textcolor{red}{b_4}$ & $a_6, \textcolor{red}{b_6}$ & $a_8, \textcolor{red}{b_8}$ \\
\hline\hline
& $a_{9} + \textcolor{red}{b_{9}}$ & $a_{11} + \textcolor{red}{b_{11}}$ & $a_{13} + \textcolor{red}{b_{13}}$ & $a_{15} + \textcolor{red}{b_{15}}$ \\
& $a_{10} + \textcolor{red}{b_{10}}$ & $a_{12} + \textcolor{red}{b_{12}}$ & $a_{14} + \textcolor{red}{b_{14}}$ & $a_{16} + \textcolor{red}{b_{16}}$ \\
\hline\hline\hline
$\textcolor{Mulberry}{c}: (16, 4)$; $\textcolor{blue}{d}: (16, 4)$  & $\textcolor{Mulberry}{c_1}, \textcolor{blue}{d_1}$ & $\textcolor{Mulberry}{c_2}, \textcolor{blue}{d_2}$ & $\textcolor{Mulberry}{c_3}, \textcolor{blue}{d_3}$ & $\textcolor{Mulberry}{c_4}, \textcolor{blue}{d_4}$ \\
\hline\hline
$\textcolor{green}{c+d}: (48, 12)$ & $\textcolor{green}{c_5+d_5}$ & $\textcolor{green}{c_8+d_8}$ & $\textcolor{green}{c_{11}+d_{11}}$ & $\textcolor{green}{c_{14}+d_{14}}$ \\
& $\textcolor{green}{c_6+d_6}$ & $\textcolor{green}{c_9+d_9}$ & $\textcolor{green}{c_{12}+d_{12}}$ & $\textcolor{green}{c_{15}+d_{15}}$ \\
& $\textcolor{green}{c_7+d_7}$ & $\textcolor{green}{c_{10}+d_{10}}$ & $\textcolor{green}{c_{13}+d_{13}}$ & $\textcolor{green}{c_{16}+d_{16}}$ \\
\hline\hline\hline
& $a_{17}+\textcolor{Mulberry}{c_{17}}$, $\textcolor{red}{b_{17}}+\textcolor{blue}{d_{17}}$ & $a_{18}+\textcolor{Mulberry}{c_{18}}$, $\textcolor{red}{b_{18}}+\textcolor{blue}{d_{18}}$ & $a_{19}+\textcolor{Mulberry}{c_{19}}$, $\textcolor{red}{b_{19}}+\textcolor{blue}{d_{19}}$ & $a_{20}+\textcolor{Mulberry}{c_{20}}$, $\textcolor{red}{b_{20}}+\textcolor{blue}{d_{20}}$ \\
 & $a_{21}+\textcolor{Mulberry}{c_{21}}$, $\textcolor{red}{b_{21}}+\textcolor{blue}{d_{21}}$ & $a_{22}+\textcolor{Mulberry}{c_{22}}$, $\textcolor{red}{b_{22}}+\textcolor{blue}{d_{22}}$ & $a_{23}+\textcolor{Mulberry}{c_{23}}$, $\textcolor{red}{b_{23}}+\textcolor{blue}{d_{23}}$ & $a_{24}+\textcolor{Mulberry}{c_{24}}$, $\textcolor{red}{b_{24}}+\textcolor{blue}{d_{24}}$ \\
& $a_{25}+\textcolor{Mulberry}{c_{25}}$, $\textcolor{red}{b_{25}}+\textcolor{blue}{d_{25}}$ & $a_{26}+\textcolor{Mulberry}{c_{26}}$, $\textcolor{red}{b_{26}}+\textcolor{blue}{d_{26}}$ & $a_{27}+\textcolor{Mulberry}{c_{27}}$, $\textcolor{red}{b_{27}}+\textcolor{blue}{d_{27}}$ & $a_{28}+\textcolor{Mulberry}{c_{28}}$, $\textcolor{red}{b_{28}}+\textcolor{blue}{d_{28}}$ \\
& $a_{29}+\textcolor{green}{c_{29}+d_{29}}$ & $a_{32}+\textcolor{green}{c_{32}+d_{32}}$ & $a_{35}+\textcolor{green}{c_{35}+d_{35}}$ & $a_{38}+\textcolor{green}{c_{38}+d_{38}}$ \\
& $a_{30}+\textcolor{green}{c_{30}+d_{30}}$ & $a_{33}+\textcolor{green}{c_{33}+d_{33}}$ & $a_{36}+\textcolor{green}{c_{36}+d_{36}}$ & $a_{39}+\textcolor{green}{c_{39}+d_{39}}$ \\
& $a_{31}+\textcolor{green}{c_{31}+d_{31}}$ & $a_{34}+\textcolor{green}{c_{34}+d_{34}}$ & $a_{37}+\textcolor{green}{c_{37}+d_{37}}$ & $a_{40}+\textcolor{green}{c_{40}+d_{40}}$ \\
& $\textcolor{red}{b_{29}}+\textcolor{green}{c_{41}+d_{41}}$ & $\textcolor{red}{b_{32}}+\textcolor{green}{c_{44}+d_{44}}$ & $\textcolor{red}{b_{35}}+\textcolor{green}{c_{47}+d_{47}}$ & $\textcolor{red}{b_{38}}+\textcolor{green}{c_{50}+d_{50}}$ \\
& $\textcolor{red}{b_{30}}+\textcolor{green}{c_{42}+d_{42}}$ & $\textcolor{red}{b_{33}}+\textcolor{green}{c_{45}+d_{45}}$ & $\textcolor{red}{b_{36}}+\textcolor{green}{c_{48}+d_{48}}$ & $\textcolor{red}{b_{39}}+\textcolor{green}{c_{51}+d_{51}}$ \\
& $\textcolor{red}{b_{31}}+\textcolor{green}{c_{43}+d_{43}}$ & $\textcolor{red}{b_{34}}+\textcolor{green}{c_{46}+d_{46}}$ & $\textcolor{red}{b_{37}}+\textcolor{green}{c_{49}+d_{49}}$ & $\textcolor{red}{b_{40}}+\textcolor{green}{c_{52}+d_{52}}$ \\
& $a_{41}+\textcolor{red}{b_{41}}+\textcolor{green}{c_{53}+d_{53}}$ & $a_{44}+\textcolor{red}{b_{44}}+\textcolor{green}{c_{56}+d_{56}}$ & $a_{47}+\textcolor{red}{b_{47}}+\textcolor{green}{c_{59}+d_{59}}$ & $a_{50}+\textcolor{red}{b_{50}}+\textcolor{green}{c_{62}+d_{62}}$ \\
& $a_{42}+\textcolor{red}{b_{42}}+\textcolor{green}{c_{54}+d_{54}}$ & $a_{45}+\textcolor{red}{b_{45}}+\textcolor{green}{c_{57}+d_{57}}$ & $a_{48}+\textcolor{red}{b_{48}}+\textcolor{green}{c_{60}+d_{60}}$ & $a_{51}+\textcolor{red}{b_{51}}+\textcolor{green}{c_{63}+d_{63}}$ \\
& $a_{43}+\textcolor{red}{b_{43}}+\textcolor{green}{c_{55}+d_{55}}$ & $a_{46}+\textcolor{red}{b_{46}}+\textcolor{green}{c_{58}+d_{58}}$ & $a_{49}+\textcolor{red}{b_{49}}+\textcolor{green}{c_{61}+d_{61}}$ & $a_{52}+\textcolor{red}{b_{52}}+\textcolor{green}{c_{64}+d_{64}}$ \\
& $a_{53}+\textcolor{red}{b_{53}}$ & $a_{56}+\textcolor{red}{b_{56}}$ & $a_{59}+\textcolor{red}{b_{59}}$ & $a_{62}+\textcolor{red}{b_{62}}$ \\
& $a_{54}+\textcolor{red}{b_{54}}$ & $a_{57}+\textcolor{red}{b_{57}}$ & $a_{60}+\textcolor{red}{b_{60}}$ & $a_{63}+\textcolor{red}{b_{63}}$ \\
& $a_{55}+\textcolor{red}{b_{55}}$ & $a_{58}+\textcolor{red}{b_{58}}$ & $a_{61}+\textcolor{red}{b_{61}}$ & $a_{64}+\textcolor{red}{b_{64}}$ \\
\hline
\end{tabular}}
\end{table}

\begin{table}[h]
\centering
\caption{NB Scheme in $(N, T_1: K_1, T_2:K_2) = (4, 2:2, 1:4)$ for retrieving $W_4$} \label{tbl:eg-NB-d}
\resizebox{0.98\textwidth}{!}{
\begin{tabular}{|c||c|c|c|c|c|}\hline
Coding group&\mbox{Server-1}&\mbox{Server-2}&\mbox{Server-3} & \mbox{Server-4}\\
\hline
$\textcolor{Mulberry}{a}: (32, 8)$ & $\textcolor{Mulberry}{a_1}, \textcolor{red}{b_1}$ & $\textcolor{Mulberry}{a_3}, \textcolor{red}{b_3}$ & $\textcolor{Mulberry}{a_5}, \textcolor{red}{b_5}$ & $\textcolor{Mulberry}{a_7}, \textcolor{red}{b_7}$ \\
$\textcolor{red}{b}: (32, 8)$ & $\textcolor{Mulberry}{a_2}, \textcolor{red}{b_2}$ & $\textcolor{Mulberry}{a_4}, \textcolor{red}{b_4}$ & $\textcolor{Mulberry}{a_6}, \textcolor{red}{b_6}$ & $\textcolor{Mulberry}{a_8}, \textcolor{red}{b_8}$ \\
\hline\hline
$\textcolor{blue}{a+b}: (32, 8)$& $\textcolor{blue}{a_{9} + b_{9}}$ & $\textcolor{blue}{a_{11} + b_{11}}$ & $\textcolor{blue}{a_{13} + b_{13}}$ & $\textcolor{blue}{a_{15} + b_{15}}$ \\
& $\textcolor{blue}{a_{10} + b_{10}}$ & $\textcolor{blue}{a_{12} + b_{12}}$ & $\textcolor{blue}{a_{14} + b_{14}}$ & $\textcolor{blue}{a_{16} + b_{16}}$ \\
\hline\hline\hline
$d$: (64, 64) & $\textcolor{green}{c_1}, d_1$ & $\textcolor{green}{c_2}, d_2$ & $\textcolor{green}{c_3}, d_3$ & $\textcolor{green}{c_4}, d_4$ \\
\hline\hline
$\textcolor{green}{c}: (64, 16)$ & $\textcolor{green}{c_5}+d_5$ & $\textcolor{green}{c_8}+d_8$ & $\textcolor{green}{c_{11}}+d_{11}$ & $\textcolor{green}{c_{14}}+d_{14}$ \\
& $\textcolor{green}{c_6}+d_6$ & $\textcolor{green}{c_9}+d_9$ & $\textcolor{green}{c_{12}}+d_{12}$ & $\textcolor{green}{c_{15}}+d_{15}$ \\
& $\textcolor{green}{c_7}+d_7$ & $\textcolor{green}{c_{10}}+d_{10}$ & $\textcolor{green}{c_{13}}+d_{13}$ & $\textcolor{green}{c_{16}}+d_{16}$ \\
\hline\hline\hline
 & $\textcolor{Mulberry}{a_{17}}+\textcolor{green}{c_{17}}$, $\textcolor{red}{b_{17}}+d_{17}$ & $\textcolor{Mulberry}{a_{18}}+\textcolor{green}{c_{18}}$, $\textcolor{red}{b_{18}}+d_{18}$ & $\textcolor{Mulberry}{a_{19}}+\textcolor{green}{c_{19}}$, $\textcolor{red}{b_{19}}+d_{19}$ & $\textcolor{Mulberry}{a_{20}}+\textcolor{green}{c_{20}}$, $\textcolor{red}{b_{20}}+d_{20}$ \\
 & $\textcolor{Mulberry}{a_{21}}+\textcolor{green}{c_{21}}$, $\textcolor{red}{b_{21}}+d_{21}$ & $\textcolor{Mulberry}{a_{22}}+\textcolor{green}{c_{22}}$, $\textcolor{red}{b_{22}}+d_{22}$ & $\textcolor{Mulberry}{a_{23}}+\textcolor{green}{c_{23}}$, $\textcolor{red}{b_{23}}+d_{23}$ & $\textcolor{Mulberry}{a_{24}}+\textcolor{green}{c_{24}}$, $\textcolor{red}{b_{24}}+d_{24}$ \\
& $\textcolor{Mulberry}{a_{25}}+\textcolor{green}{c_{25}}$, $\textcolor{red}{b_{25}}+d_{25}$ & $\textcolor{Mulberry}{a_{26}}+\textcolor{green}{c_{26}}$, $\textcolor{red}{b_{26}}+d_{26}$ & $\textcolor{Mulberry}{a_{27}}+\textcolor{green}{c_{27}}$, $\textcolor{red}{b_{27}}+d_{27}$ & $\textcolor{Mulberry}{a_{28}}+\textcolor{green}{c_{28}}$, $\textcolor{red}{b_{28}}+d_{28}$ \\
& $\textcolor{Mulberry}{a_{29}}+\textcolor{green}{c_{29}}+d_{29}$ & $\textcolor{Mulberry}{a_{32}}+\textcolor{green}{c_{32}}+d_{32}$ & $\textcolor{Mulberry}{a_{35}}+\textcolor{green}{c_{35}}+d_{35}$ & $\textcolor{Mulberry}{a_{38}}+\textcolor{green}{c_{38}}+d_{38}$ \\
& $\textcolor{Mulberry}{a_{30}}+\textcolor{green}{c_{30}}+d_{30}$ & $\textcolor{Mulberry}{a_{33}}+\textcolor{green}{c_{33}}+d_{33}$ & $\textcolor{Mulberry}{a_{36}}+\textcolor{green}{c_{36}}+d_{36}$ & $\textcolor{Mulberry}{a_{39}}+\textcolor{green}{c_{39}}+d_{39}$ \\
& $\textcolor{Mulberry}{a_{31}}+\textcolor{green}{c_{31}}+d_{31}$ & $\textcolor{Mulberry}{a_{34}}+\textcolor{green}{c_{34}}+d_{34}$ & $\textcolor{Mulberry}{a_{37}}+\textcolor{green}{c_{37}}+d_{37}$ & $\textcolor{Mulberry}{a_{40}}+\textcolor{green}{c_{40}}+d_{40}$ \\
& $\textcolor{red}{b_{29}}+\textcolor{green}{c_{41}}+d_{41}$ & $\textcolor{red}{b_{32}}+\textcolor{green}{c_{44}}+d_{44}$ & $\textcolor{red}{b_{35}}+\textcolor{green}{c_{47}}+d_{47}$ & $\textcolor{red}{b_{38}}+\textcolor{green}{c_{50}}+d_{50}$ \\
& $\textcolor{red}{b_{30}}+\textcolor{green}{c_{42}}+d_{42}$ & $\textcolor{red}{b_{33}}+\textcolor{green}{c_{45}}+d_{45}$ & $\textcolor{red}{b_{36}}+\textcolor{green}{c_{48}}+d_{48}$ & $\textcolor{red}{b_{39}}+\textcolor{green}{c_{51}}+d_{51}$ \\
& $\textcolor{red}{b_{31}}+\textcolor{green}{c_{43}}+d_{43}$ & $\textcolor{red}{b_{34}}+\textcolor{green}{c_{46}}+d_{46}$ & $\textcolor{red}{b_{37}}+\textcolor{green}{c_{49}}+d_{49}$ & $\textcolor{red}{b_{40}}+\textcolor{green}{c_{52}}+d_{52}$ \\
& $\textcolor{blue}{a_{41}+b_{41}}+\textcolor{green}{c_{53}}+d_{53}$ & $\textcolor{blue}{a_{44}+b_{44}}+\textcolor{green}{c_{56}}+d_{56}$ & $\textcolor{blue}{a_{47}+b_{47}}+\textcolor{green}{c_{59}}+d_{59}$ & $\textcolor{blue}{a_{50}+b_{50}}+\textcolor{green}{c_{62}}+d_{62}$ \\
& $\textcolor{blue}{a_{42}+b_{42}}+\textcolor{green}{c_{54}}+d_{54}$ & $\textcolor{blue}{a_{45}+b_{45}}+\textcolor{green}{c_{57}}+d_{57}$ & $\textcolor{blue}{a_{48}+b_{48}}+\textcolor{green}{c_{60}}+d_{60}$ & $\textcolor{blue}{a_{51}+b_{51}}+\textcolor{green}{c_{63}}+d_{63}$ \\
& $\textcolor{blue}{a_{43}+b_{43}}+\textcolor{green}{c_{55}}+d_{55}$ & $\textcolor{blue}{a_{46}+b_{46}}+\textcolor{green}{c_{58}}+d_{58}$ & $\textcolor{blue}{a_{49}+b_{49}}+\textcolor{green}{c_{61}}+d_{61}$ & $\textcolor{blue}{a_{52}+b_{52}}+\textcolor{green}{c_{64}}+d_{64}$ \\
& $\textcolor{blue}{a_{53}+b_{53}}$ & $\textcolor{blue}{a_{56}+b_{56}}$ & $\textcolor{blue}{a_{59}+b_{59}}$ & $\textcolor{blue}{a_{62}+b_{62}}$ \\
& $\textcolor{blue}{a_{54}+b_{54}}$ & $\textcolor{blue}{a_{57}+b_{57}}$ & $\textcolor{blue}{a_{60}+b_{60}}$ & $\textcolor{blue}{a_{63}+b_{63}}$ \\
& $\textcolor{blue}{a_{55}+b_{55}}$ & $\textcolor{blue}{a_{58}+b_{58}}$ & $\textcolor{blue}{a_{61}+b_{61}}$ & $\textcolor{blue}{a_{64}+b_{64}}$ \\
\hline
\end{tabular}}
\end{table}

\vspace{0.2cm}
\noindent\textbf{Encoding:} 
The coding structure is illustrated in Table \ref{tbl:eg-NB-a} and Table \ref{tbl:eg-NB-d}, for the retrieval of $W_1$ and $W_4$, respectively. The coding procedure also consists of three steps, as in the NS code, however the patterns are different, which is evident from the tables.

\vspace{0.2cm}
\noindent\textbf{Decoding and correctness:} There are three blocks in Table \ref{tbl:eg-NB-a} and Table \ref{tbl:eg-NB-d}. In Table \ref{tbl:eg-NB-a}, the symbols $c_{1:4}$, $d_{1:4}$, and $c_{5:16}+d_{5:16}$ in the second block can be used to reconstruct the interference signals in the third block, i.e., $c_{17:28}$, $d_{17:28}$, and $c_{29:64} + d_{29:64}$, by the property of the MDS code in each coding group. Canceling these interference signals generated by $W_{3:4}$, i.e., eliminating the coded symbols $c$ and $d$, Table \ref{tbl:eg-NB-a} essentially reduces to the scheme discussed in Section \ref{sec:review} for the $2$-private system: here $32$ interference signals $b_{1:8, 17:40}$ can be used to reconstruct $b_{9:16, 41:64}$. The desired message $W_1$ can thus be recovered. By symmetry, $W_2$ can be retrieved similarly. 

In Table \ref{tbl:eg-NB-d}, the symbols $a_{1:8}$, $b_{1:8},$ and $a_{9:16}+b_{9:16}$ in the first block can be used to reconstruct the interference signals in the third block, i.e., $a_{17:40}$, $b_{17:40}$, and $a_{41:64} + b_{41:64}$. Canceling the interference signals generated by $W_{1:2}$, i.e., eliminating the coded symbols $a$ and $b$, Table \ref{tbl:eg-NB-d} reduces to the scheme discussed in Section \ref{sec:review} for the $1$-private system, and the desired message $W_4$ can be recovered. By symmetry, $W_3$ can be retrieved similarly. 

\vspace{0.2cm}
\noindent\textbf{Privacy:}  When message $W_1$ or $W_2$ is requested, the coded symbols of message $W_{3:4}$ are downloaded as interference signals, and the interference signals such as $c$, $d$, or $c+d$ are mixed to $a$, $b$, $a+b$. With the symbols $c, d$ eliminated in Table \ref{tbl:eg-NB-a}, we have the retrieval pattern of the $2$-private system, which is clearly $2$-private. To see all the messages have privacy level $1$, observe that the coding pattern is the same for the retrieval of any message. In both Table \ref{tbl:eg-NB-a} and Table \ref{tbl:eg-NB-d}, the coding matrix of the coded symbols for any single message from any single server has full row rank. Thus by Lemma \ref{lem:equivlent}, messages $W_{1:4}$ have privacy level $1$.

\vspace{0.2cm}
\noindent\textbf{Performance:} The total number of downloaded symbols is $116$ and the message length is $64$. Thus the rate is $R_{\NB} = \frac{64}{116} = \frac{16}{29}$, which coincides with $R_{\NS}$ in this example. 
\vspace{0.2cm}

\noindent\textit{Remark:} The coding structure has the following feature: eliminating the coded symbols $c$ and $d$ in Table \ref{tbl:eg-NB-a} or Table \ref{tbl:eg-NB-d}, the remaining part has the same coding structure as the $2$-private code discussed in Section \ref{sec:review}; eliminating the coded symbols $a$ and $b$ in Table \ref{tbl:eg-NB-a} or Table \ref{tbl:eg-NB-d}, the remaining part has the same coding structure as the $1$-private code discussed in Section \ref{sec:review}. The NB coding structure can be interpreted as a mixture of the $T_1$-private code of message $W_{1:K_1}$ and $T_2$-private code for messages $W_{K_1+1:K_2}$ discussed in Section \ref{sec:review}, which is constructed in three blocks. Since the retrieval needs to follow the same pattern, the underlying $T_1$-private code and the underlying $T_2$-private code are required to have the same message length.  A $\frac{T_2}{N}$ fraction of the $T_1$-private code forms the first block, a $\frac{T_2}{N}$ fraction of the $T_2$-private code forms the second block. The remaining $\frac{N - T_2}{N}$ fractions of both codes are mixed together to form the third block by simple pairwise summations in an arbitrary order; in case they have different numbers of remaining coded symbols, the remaining summands are included directly.

\section{Capacity Upper Bounds} \label{sec:upper-bound}

\subsection{Converse proof of Theorem \ref{thm:main}}
Similar to the converse proof of $T$-private systems in \cite{sun2017capacity}, we first introduce the following lemma, and apply the lemma iteratively to provide an upper bound of the capacity.
\begin{lemma}\label{lem:iter}
For any $i \in 1:K_1-1$,
\begin{align}
H(A_{1:N}^{[i]}  | Q_{1:N}^{[i]}, W_{1:i} ) \geq \frac{T_1}{N}L + \frac{T_1}{N} H(A_{1:N}^{[i+1]}  | Q_{1:N}^{[i+1]}, W_{1:i+1});
\end{align}
and for any $j \in K_1:K_2 - 1$,
\begin{align}
H(A_{1:N}^{[j]}  | Q_{1:N}^{[j]}, W_{1:j} ) \geq \frac{T_2}{N}L + \frac{T_2}{N} H(A_{1:N}^{[j+1]}  | Q_{1:N}^{[j+1]}, W_{1:j+1}).
\end{align}
\end{lemma}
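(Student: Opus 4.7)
The plan is to prove both inequalities by the same three-ingredient argument: (a) a decodability identity that converts the $L$ term on the right-hand side into a mere change of conditioning, (b) the $T$-privacy constraint applied on a $T$-subset $\Tc\subseteq 1{:}N$, and (c) Han's inequality to lift $A^{[i+1]}_\Tc$ back to $A^{[i+1]}_{1:N}$. I will describe the first case ($i\in 1{:}K_1-1$, privacy level $T_1$); the second is identical after substituting $T_1\to T_2$, noting that both $W_j$ and $W_{j+1}$ belong to $W_{1:K_2}$, whose privacy level is $T_2$.

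First I would establish the decodability identity. Because $W_{i+1}$ is recoverable from $(A_{1:N}^{[i+1]}, Q_{1:N}^{[i+1]})$ and queries are independent of messages, the chain rule gives
\begin{align}
H(A_{1:N}^{[i+1]}\,|\,Q_{1:N}^{[i+1]}, W_{1:i}) = L + H(A_{1:N}^{[i+1]}\,|\,Q_{1:N}^{[i+1]}, W_{1:i+1}),
\end{align}
so the claim is equivalent to the reduced inequality
$H(A_{1:N}^{[i]}\,|\,Q_{1:N}^{[i]}, W_{1:i}) \geq \tfrac{T_1}{N}\, H(A_{1:N}^{[i+1]}\,|\,Q_{1:N}^{[i+1]}, W_{1:i})$.

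Next, I would fix an arbitrary $\Tc\subseteq 1{:}N$ with $|\Tc|=T_1$. Since $A_\Tc$ is a deterministic function of $(Q_\Tc, W_{1:K_2})$ and messages are independent of all queries, conditioning on $Q_{-\Tc}$ in addition to $Q_\Tc$ does not change the entropy of $A_\Tc$; hence $H(A_\Tc^{[i]}\,|\,Q_{1:N}^{[i]}, W_{1:i}) = H(A_\Tc^{[i]}\,|\,Q_\Tc^{[i]}, W_{1:i})$, and similarly for index $[i+1]$. Combining a trivial answer-dropping step with the $T_1$-privacy distributional identity $(Q_\Tc^{[i]}, A_\Tc^{[i]}, W_{1:K_2}) \sim (Q_\Tc^{[i+1]}, A_\Tc^{[i+1]}, W_{1:K_2})$ (a direct consequence of Definition \ref{def:privacy level} together with the determinism of answers given queries and messages) yields, for every such $\Tc$,
\begin{align}
H(A_{1:N}^{[i]}\,|\,Q_{1:N}^{[i]}, W_{1:i}) \;\geq\; H(A_\Tc^{[i]}\,|\,Q_\Tc^{[i]}, W_{1:i}) \;=\; H(A_\Tc^{[i+1]}\,|\,Q_{1:N}^{[i+1]}, W_{1:i}).
\end{align}

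Finally, I would average the right-hand side over all $\binom{N}{T_1}$ choices of $\Tc$ and invoke the conditional form of Han's inequality applied to the random vector $A_{1:N}^{[i+1]}$ conditioned on $(Q_{1:N}^{[i+1]}, W_{1:i})$, which supplies
\begin{align}
\frac{1}{\binom{N}{T_1}} \sum_{|\Tc|=T_1} H(A_\Tc^{[i+1]}\,|\,Q_{1:N}^{[i+1]}, W_{1:i}) \;\geq\; \frac{T_1}{N}\, H(A_{1:N}^{[i+1]}\,|\,Q_{1:N}^{[i+1]}, W_{1:i}).
\end{align}
Chaining with step (a) delivers the claimed bound. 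The main obstacle I anticipate is the conditional-independence bookkeeping in step (b)---specifically, verifying that enlarging or shrinking the query conditioning from $Q_\Tc$ to $Q_{1:N}$ does not affect the entropy of $A_\Tc$---but this follows cleanly from the functional definition of the answers together with the overall independence of queries and messages, and every other step reduces to standard information-theoretic manipulations.
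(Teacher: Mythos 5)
Your proof is correct and follows essentially the same route as the paper's: drop $A^{[i]}_{1:N}$ down to a $T_1$-subset, use the Markov/independence structure to swap $Q_{1:N}$ for $Q_\Tc$ in the conditioning, invoke $T_1$-privacy to replace the query index $[i]$ with $[i{+}1]$, swap the conditioning back, average over $\Tc$ and apply Han's inequality, and finally use decodability of $W_{i+1}$ to extract the $L$ term. The only cosmetic difference is that you establish the decodability identity up front and reduce the target inequality, whereas the paper peels off the $L$ term at the end; the underlying chain of steps is identical.
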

\begin{proof}[Proof of Lemma \ref{lem:iter}]
For any $i \in 1:K_1-1$,
\begin{align}
&H(A_{1:N}^{[i]}  | Q_{1:N}^{[i]}, W_{1:i} ) \\
&\geq \binom{N}{T_1}^{-1} \sum_{\Tc_1: |\Tc_1| = T_1} H(A_{\Tc_1}^{[i]}  | Q_{1:N}^{[i]}, W_{1:i} ) = \binom{N}{T_1}^{-1} \sum_{\Tc_1: |\Tc_1| = T_1} H(A_{\Tc_1}^{[i]}  | Q_{\Tc_1}^{[i]}, W_{1:i} ) \\
&= \binom{N}{T_1}^{-1} \sum_{\Tc_1: |\Tc_1| = T_1} H(A_{\Tc_1}^{[i+1]}  | Q_{\Tc_1}^{[i+1]}, W_{1:i} ) = \binom{N}{T_1}^{-1} \sum_{\Tc_1: |\Tc_1| = T_1} H(A_{\Tc_1}^{[i+1]}  | Q_{1:N}^{[i+1]}, W_{1:i} ) \\
&\geq \frac{T_1}{N} H(A_{1:N}^{[i+1]} | Q_{1:N}^{[i+1]}, W_{1:i}) = \frac{T_1}{N} H(A_{1:N}^{[i+1]}, W_{i+1} | Q_{1:N}^{[i+1]}, W_{1:i}) \\
&= \frac{T_1}{N} L + \frac{T_1}{N} H(A_{1:N}^{[i+1]} | Q_{1:N}^{[i+1]}, W_{1:i+1}),
\end{align}
where the first and third equalities are due to the Markov string $Q_{1:N \slash \Tc}^{[k]} \leftrightarrow Q_{\Tc}^{[k]} \leftrightarrow (W_{1:K_2}, A_{\Tc}^{[k]})$ and the independence between $W_{1:K_2}$ and $Q_{1:N}^{[k]}$ for any $k \in 1:K_2$ and $\Tc \subset 1:N$; the second equality is by the $T_1$-privacy among messages $W_{1:K_1}$; and the last inequality is by Han's inequality. Similarly, for any $j \in K_1 : K_2-1$, 
\begin{align}
&H(A_{1:N}^{[j]} | Q_{1:N}^{[j]}, W_{1:j} ) \\
&\geq \binom{N}{T_2}^{-1} \sum_{\Tc_2: |\Tc_2| = T_2} H(A_{\Tc_2}^{[j]} | Q_{1:N}^{[j]}, W_{1:j} ) = \binom{N}{T_2}^{-1} \sum_{\Tc_2: |\Tc_2| = T_2} H(A_{\Tc_2}^{[j]} | Q_{\Tc_2}^{[j]}, W_{1:j} ) \\
&= \binom{N}{T_2}^{-1} \sum_{\Tc_2: |\Tc_2| = T_2} H(A_{\Tc_2}^{[j+1]} | Q_{\Tc_2}^{[j+1]}, W_{1:j} ) = \binom{N}{T_2}^{-1} \sum_{\Tc_2: |\Tc_2| = T_2} H(A_{\Tc_2}^{[j+1]} | Q_{1:N}^{[j+1]}, W_{1:j} ) \\
&\geq \frac{T_2}{N} H(A_{1:N}^{[j+1]} | Q_{1:N}^{[j+1]}, W_{1:j}) = \frac{T_2}{N} H(A_{1:N}^{[j+1]}, W_{j+1} | Q_{1:N}^{[j+1]}, W_{1:j}) \\
&= \frac{T_2}{N} L + \frac{T_2}{N} H(A_{1:N}^{[j+1]} | Q_{1:N}^{[j+1]}, W_{1:j+1}).
\end{align}
The lemma is proved.
\end{proof}
With the inequalities in Lemma \ref{lem:iter}, we prove the $\overline{R}$ upper bound in Theorem \ref{thm:main}.
\begin{align}
&\sum_{n = 1}^N H(A^{[1]}_{n} | Q_{1:N}^{[1]}) \geq H(A^{[1]}_{1:N} | Q_{1:N}^{[1]}) = H(W_1) + H(A^{[1]}_{1:N}  | Q_{1:N}^{[1]}, W_1)
\end{align}
Iteratively apply Lemma \ref{lem:iter} by letting $k = 1, 2, \ldots,  K_1 - 1$,
\begin{align}
H(A^{[1]}_{1:N}  | Q_{1:N}^{[1]}, W_1) &\geq \left( \frac{T_1}{N} + \cdots + \left(\frac{T_1}{N}\right)^{K_1 - 1} \right) L + \left(\frac{T_1}{N}\right)^{K_1 - 1} H(A_{1:N}^{[K_1]} | Q_{1:N}^{[K_1]}, W_{1:K_1})
\end{align}
Then iteratively apply Lemma \ref{lem:iter} with $k = K_1, \ldots, K_2 -1$,
\begin{align}
H(A_{1:N}^{[K_1]} | Q_{1:N}^{[K_1]}, W_{1:K_1}) & \geq \left(\frac{T_2}{N} + \cdots + \left(\frac{T_2}{N}\right)^{K_2-K_1} \right)L.
\end{align}
Therefore, we have
\begin{align}
&\sum_{n = 1}^N H(A^{[1]}_{n} | Q_{1:N}^{[1]}) \geq \left(1 + \frac{T_1}{N}+ \cdots + \left(\frac{T_1}{N}\right)^{K_1 - 1} \right) L +\left(\frac{T_1}{N}\right)^{K_1 - 1} \left(\frac{T_2}{N} + \cdots + \left(\frac{T_2}{N}\right)^{K_2-K_1} \right)L \notag \\
&= D_N^*\left(K_1, T_1\right)L +  \frac{T_2}{N} \left( \frac{T_1}{N} \right)^{K_1-1} D_N^*(K_2 - K_1, T_2)L.
\end{align}
It follows 
\begin{align}
& C \leq \frac{L}{\sum_{n = 1}^N H(A_n^{[1]} | Q_{1:N}^{[1]}) } \leq \left( D_N^*\left(K_1, T_1\right) +  \frac{T_2}{N} \left( \frac{T_1}{N} \right)^{K_1-1} D_N^*(K_2 - K_1, T_2)  \right)^{-1},
\end{align}
which concludes the proof. \qed

\subsection{Proof of Proposition \ref{lem:3,2:2,1:3}}
First, we have
\begin{align}
&H(A_{1:3}^{[1]} | Q_{1:3}^{[1]}, W_1) \geq \frac{1}{2}\left( H(A_{1,2}^{[1]} | Q_{1,2}^{[1]}, W_1) + H(A_{1,3}^{[1]} | Q_{1,3}^{[1]}, W_1) \right) \\
& = \frac{1}{2}\left( H(A_{1,2}^{[2]} | Q_{1,2}^{[2]}, W_1) + H(A_{1,3}^{[2]} | Q_{1,3}^{[2]}, W_1) \right) \\
& = \frac{1}{2}\left( H(A_{1,2}^{[2]} | Q_{1:3}^{[2]}, W_1) + H(A_{1,3}^{[2]} | Q_{1:3}^{[2]}, W_1) \right) \\
& \geq \frac{1}{2}\left( H(A_{1}^{[2]} | Q_{1:3}^{[2]}, W_1) + H(A_{1:3}^{[2]} | Q_{1:3}^{[2]}, W_1) \right) \\
& = \frac{1}{2}\left( H(A_{1}^{[2]} | Q_{1:3}^{[2]}, W_1) + L + H(A_{1:3}^{[2]} | Q_{1:3}^{[2]}, W_{1:2}) \right) \\
& \geq \frac{1}{2} \left( H(A_{1}^{[2]} | Q_{1:3}^{[2]}, W_1) + L + \frac{1}{3}L + \frac{1}{3} H(A_{1:3}^{[3]} | Q_{1:3}^{[3]}, W_{1:3}) \right)\\
& = \frac{1}{2} H(A_{1}^{[2]} | Q_{1}^{[2]}, W_1) + \frac{2}{3} L,
\end{align}
where the first equality is because $W_{1:2}$ have privacy level $2$; the second inequality is the submodular inequality; and the last inequality is by Lemma \ref{lem:iter}. By the symmetry between messages $W_{1:2}$ and symmetries among servers $1:3$, we can similarly derive for any $n = 1,2,3$,
\begin{align}
H(A_{1:3}^{[1]} | Q_{1:3}^{[1]}, W_1) &\geq \frac{1}{2} H(A_{n}^{[2]} | Q_{n}^{[2]}, W_1) + \frac{2}{3}L, \\
H(A_{1:3}^{[2]} | Q_{1:3}^{[2]}, W_2) & \geq \frac{1}{2}H(A_{n}^{[1]} | Q_{n}^{[1]}, W_2) + \frac{2}{3}L.
\end{align}
The summation of the first term in each lower bound above satisfies
\begin{align}
&H(A_{n}^{[1]} | Q_{n}^{[1]}, W_2) +  H(A_{n}^{[2]} | Q_{n}^{[2]}, W_1) = H(A_{n}^{[1]}, W_2 | Q_{n}^{[1]}) + H(A_{n}^{[2]}, W_1 | Q_{n}^{[2]}) - 2L \\
& = H(A_{n}^{[1]}, W_2 | Q_{n}^{[1]}) + H(A_{n}^{[1]}, W_1 | Q_{n}^{[1]}) - 2L \\
& \geq H(A_{n}^{[1]} | Q_{n}^{[1]}) + H(A_{n}^{[1]}, W_{1,2} | Q_{n}^{[1]}) - 2L = H(A_{n}^{[1]} | Q_{n}^{[1]})  + H(A_{n}^{[1]} | Q_{n}^{[1]}, W_{1,2})\\
& \geq H(A_{n}^{[1]} | Q_{1:3}^{[1]})  + \frac{1}{3}L,
\end{align}
where the first inequality is the submodular inequality; and the last inequality is by Lemma \ref{lem:iter}.
Then it follows that
\begin{align}
H(A_{1:3}^{[1]} | Q_{1:3}^{[1]}, W_1) + H(A_{1:3}^{[2]} | Q_{1:3}^{[2]}, W_2) \geq \frac{1}{6} \sum_{n = 1}^3 H(A_{n}^{[1]} | Q_{1:3}^{[1]}) + \frac{3}{2}L. \label{eqn:sub}
\end{align}
By the inequality (\ref{eqn:sub}) above, we have
\begin{align}
& 2 \sum_{n = 1}^3 H(A_{n}^{[1]} | Q_{1:3}^{[1]}) =  \sum_{n = 1}^3 H(A_{n}^{[1]} | Q_{1:3}^{[1]}) + \sum_{n = 1}^3 H(A_{n}^{[2]} | Q_{1:3}^{[2]}) \\
& \geq H(A_{1:3}^{[1]} | Q_{1:3}^{[1]}) + H(A_{1:3}^{[2]} | Q_{1:3}^{[2]}) \\
& \geq  H(A_{1:3}^{[1]} | Q_{1:3}^{[1]}, W_1) + H(A_{1:3}^{[2]} | Q_{1:3}^{[2]}, W_2) + 2L\\
& \geq \frac{1}{6} \sum_{n = 1}^3 H(A_{n}^{[1]} | Q_{1:3}^{[1]}) + \frac{7}{2}L.
\end{align}
It implies
\begin{align}
\sum_{n = 1}^3 H(A_{n}^{[1]} | Q_{1:3}^{[1]})  \geq \frac{21}{11}L,
\end{align}
which gives that
\begin{equation*}
C \leq \frac{L}{\sum_{n = 1}^3 H(A_{n}^{[1]} | Q_{1:3}^{[1]}) } \leq \frac{11}{21}. \qedhere
\end{equation*}
\qed

\section{The Non-uniform Successive Cancellation Scheme}  \label{sec:NS}

In this section, we provide the general code construction for the non-uniform successive cancellation scheme. 

\subsection{Specifying coding group parameters}

It is clear from the example in Section \ref{sec:eg-NS} that the proposed code can be viewed as consisting of $K_2$ layers and multiple coding groups. We next first specify the appropriate parameters for each coding group. We identify each coding group by its composition. For example, in Table \ref{tbl:eg-NS-a}, the red coding group is of form \textcolor{red}{$b+c$}, and thus we can use the set of message indices involved to identify it as $\mathcal{K}=\{2,3\}$, i.e., it involves the messages $(W_2,W_3)$. Clearly this coding group will be placed in the $2^{nd}$ and $3^{rd}$ layers. 

More generally, for each coding group, there are a total of five parameters to specify: the total number of coded symbols $n_1(\Kc)$ and $n_2(\Kc)$, and the number of MDS code message symbols $k_1(\Kc)$ and $k_2(\Kc)$ when retrieving a message of privacy level $T_1$ and that of privacy level $T_2$, respectively; and the number of symbols to be placed in the top layer $m(\Kc)$.
In other words, during the retrieval of a message $W_{k^*}$, when $k^*\in 1:K_1$, an $(n_1(\Kc), k_1(\Kc))$ MDS code is used for this coding group, while during the retrieval of a message $W_{k^*}$, when $k^*\in K_1+1:K_2$, an $(n_2(\Kc), k_2(\Kc))$ MDS code is used for this coding group. After MDS encoding, $m(\Kc)$ symbols will be placed in the $|\Kc|$-th layer, while the remaining will be placed in the $(|\Kc|+1)$-th layer as interference, and the symbols are uniformly distributed across all servers.

The message length for the NS coding scheme is $L = N^{K_2}$ in the proposed scheme; note that the length may be reduced in some cases, however we choose this value to simplify the presentation of the code construction without any loss in terms of the download cost. To introduce $(n_1(\Kc),k_1(\Kc), n_2(\Kc), k_2(\Kc),m(\Kc))$, we first define 
\begin{align}
M \triangleq T_2^{K_2-K_1} + \frac{T_1 - T_2}{N - T_2}\left( N^{K_2-K_1} - T_2^{K_2-K_1} \right) = N^{K_2-K_1} - \frac{N - T_1}{N - T_2}\left( N^{K_2-K_1} - T_2^{K_2-K_1} \right),
\end{align}
which is an integer. For any $(i, j) \in 0:K_1 \times 0:K_2-K_1$, define $d_{0, 0} \triangleq 0$ and for $i + j \geq 1$, define
\[
d_{i,j} \triangleq \left\{
\begin{array}{ll}
 M T_1^{K_1 - i} (N - T_1)^{i - 1}, \quad \text{if }j = 0\\
 T_1^{K_1 - i} (N - T_1)^{i} T_2^{K_2 - K_1 - j} (N - T_2)^{j-1}, \quad \text{otherwise}
\end{array}
\right.
\]
Then we specify 
\begin{align}
m(\Kc) &\triangleq N d_{|\Kc \cap 1:K_1|, |\Kc \cap K_1+1:K_2|},
\end{align}
and
\begin{align}
n_1(\Kc) &\triangleq m(\Kc) + N d_{|\Kc \cap 1:K_1| +1, |\Kc \cap K_1+1:K_2|}, \quad k_1(\Kc) \triangleq \frac{T_1}{N} n_1(\Kc),\\
n_2(\Kc) &\triangleq m(\Kc) + N d_{|\Kc \cap 1:K_1|, |\Kc \cap K_1+1:K_2|+1}, \quad k_2(\Kc) \triangleq \frac{T_2}{N} n_2(\Kc).
\end{align}

The properties of the functions used for encoding, correctness and privacy of the NS coding scheme, are summarized as Lemma \ref{lem:property-mnk} below, which is proved in the appendix.
\begin{lemma}\label{lem:property-mnk}
The tuple $(n_1(\cdot), k_1(\cdot), n_2(\cdot), k_2(\cdot), m(\cdot))$ has the following properties:
\begin{enumerate}
\item For any non-empty $\Kc \subset 1:K_2$,
\begin{align}
k_1(\Kc) = m(\Kc), \quad k_2(\Kc) \leq m(\Kc)
\end{align}
\item The following equality holds:
\begin{align}
\sum_{\Kc \subset 1:K_2,~k^* \in \Kc} m(\Kc) = L
\end{align}
\item When $k^* \in 1:K_1$, for any $k \not= k^*$ the following inequality holds:
\begin{align}
\sum_{\Kc \subset 1:K_2\slash \{k^*\},~ k \in \Kc} k_1(\Kc) < L
\end{align}
When $k^* \in K_1+1:K_2$, for any $k\not = k^*$, the following inequality holds:
\begin{align}
\sum_{\Kc \subset 1:K_2\slash \{k^*\},~k \in \Kc} k_2(\Kc) < L
\end{align}
\end{enumerate}
\end{lemma}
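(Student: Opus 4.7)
The plan is to reduce all three properties to algebraic identities in $d_{i,j}$, and then derive these from two one-step recursions together with the binomial theorem. The preliminary step I would carry out is to verify
\begin{align}
T_1 d_{i+1,j} = (N-T_1) d_{i,j}, \qquad T_2 d_{i,j+1} = (N-T_2) d_{i,j},
\end{align}
as direct factor comparisons on the two branches ($j=0$ and $j\ge 1$) of the definition; the second recursion holds with equality whenever $j\ge 1$, and the only boundary case for the $k_2$-inequality is $j=0$, $i\ge 1$, where it reduces to $M(N-T_2) \ge (N-T_1) T_2^{K_2-K_1}$. Expanding the definition of $M$ gives $M(N-T_2) = T_2^{K_2-K_1}(N-T_1) + (T_1-T_2)N^{K_2-K_1}$, so the needed inequality becomes $(T_1-T_2)N^{K_2-K_1}\ge 0$, which holds by hypothesis. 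Property~1 is then immediate: with $i = |\Kc\cap 1{:}K_1|$ and $j = |\Kc\cap K_1{+}1{:}K_2|$, the definitions give $k_1(\Kc) = T_1(d_{i,j}+d_{i+1,j}) = N d_{i,j} = m(\Kc)$ and $k_2(\Kc) = T_2(d_{i,j}+d_{i,j+1}) \le N d_{i,j} = m(\Kc)$.

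For Property~2 I would split on whether $k^*\in 1{:}K_1$ or $k^*\in K_1{+}1{:}K_2$, writing $m(\Kc) = Nd_{i,j}$ and grouping the subsets containing $k^*$ by $(i,j)$ with multiplicities $\binom{K_1-1}{i-1}\binom{K_2-K_1}{j}$ and $\binom{K_1}{i}\binom{K_2-K_1-1}{j-1}$ respectively. The second case is cleaner because only $j\ge 1$ appears and the single formula for $d_{i,j}$ applies, giving $N\cdot N^{K_1}\cdot N^{K_2-K_1-1} = N^{K_2}$ after two binomial expansions. The first case is the crux: for each fixed $i\ge 1$ the inner sum splits into the $j=0$ term (which carries $M$) and a binomial sum over $j\ge 1$ which evaluates to $T_1^{K_1-i}(N-T_1)^i\cdot (N^{K_2-K_1}-T_2^{K_2-K_1})/(N-T_2)$. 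The bracket multiplying $T_1^{K_1-i}(N-T_1)^{i-1}$ is then $M + (N-T_1)(N^{K_2-K_1}-T_2^{K_2-K_1})/(N-T_2)$, which by the alternative form of $M$ collapses to the pure constant $N^{K_2-K_1}$. A final binomial sum over $i$ delivers $N^{K_2} = L$.

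For Property~3 I would reuse the machinery of Property~2 but now force $k\in\Kc$ and $k^*\notin\Kc$, which reduces one of the binomial coefficients by 1. For $k^*\in 1{:}K_1$, where $k_1(\Kc)=m(\Kc)$ by Property~1, the two subcases (according to whether $k\in 1{:}K_1$ or $k\in K_1{+}1{:}K_2$) each evaluate to $T_1 N^{K_2-1}$ after the same consolidation. For $k^*\in K_1{+}1{:}K_2$ I would use $k_2(\Kc)\le m(\Kc)$ from Property~1 to upper-bound by the analogous $m$-sum; the subcase $k\in K_1{+}1{:}K_2$ yields $T_2 N^{K_2-1}$ directly, while the subcase $k\in 1{:}K_1$ yields $T_1 N^{K_2-1}$ after a second invocation of the alternative form of $M$ to simplify a bracket structurally identical to the one in Property~2. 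All four totals are strictly less than $N^{K_2}=L$ because $T_1,T_2<N$.

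The main obstacle throughout is bookkeeping at the $j=0$ boundary, where the scheme's parameter $M$ is engineered precisely so that the stray $j=0$ term and the rescaled $j\ge 1$ binomial sum consolidate into a pure power of $N$. Once the two alternative forms of $M$ are both in hand and the consolidation is recognized, each of the three properties becomes a short computation; the recursions promote $m(\Kc)$, $k_1(\Kc)$, $k_2(\Kc)$ to the uniform expression $Nd_{i,j}$ (up to the controlled $k_2$-slack at $j=0$), after which the double binomial sums over $(i,j)$ are the only remaining work.
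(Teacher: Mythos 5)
Your proposal is correct and follows essentially the same route as the paper: establish the one-step ratios $T_1\,d_{i+1,j} = (N-T_1)\,d_{i,j}$ (for $i+j\ge 1$) and $T_2\,d_{i,j+1} \le (N-T_2)\,d_{i,j}$ (equality for $j\ge 1$, the $j=0,\,i\ge 1$ boundary handled via the two equivalent forms of $M$), reduce item~1 to these ratios, and evaluate items~2 and~3 as double binomial sums whose inner brackets collapse using $M = N^{K_2-K_1} - \tfrac{N-T_1}{N-T_2}(N^{K_2-K_1}-T_2^{K_2-K_1})$. The one place you deviate is item~3 with $k^*\in K_1{+}1{:}K_2$ and $k\in 1{:}K_1$, where you bound $\sum k_2(\Kc)\le\sum m(\Kc)=T_1 N^{K_2-1}$ rather than computing $\sum k_2(\Kc)=T_2 N^{K_2-1}+(N-T_1)T_2 N^{K_2-2}$ exactly as the paper does; your bound is looser but still gives the required strict inequality under $T_1<N$ (which you state explicitly and the paper leaves implicit), so the argument goes through.
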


\subsection{Encoding, decoding, privacy, and performance}

\noindent\textbf{Encoding:} The queries and answers are formed in three steps, and the queries are simply the encoding matrix for these answers. Assume for each $(n,k)$ pair where $n\geq k$, an MDS code in $\Fb_q$ is given and fixed, and we refer to it as the $(n,k)$ MDS code.
\begin{enumerate}
\item \textit{Precoding:} Let $S_{1:K_2}$ be $K_2$ independent random matrices, which are uniformly drawn from the set of all $N^{K_2} \times N^{K_2}$ full rank matrices over $\Fb_q$; these matrices are known only to the user. The precoded messages $W^*_{1:K_2}$ are
\begin{align}
W^*_k = S_k W_k, \quad \forall k \in 1:K_2.
\end{align}
\item \textit{Group-wise MDS coding:}

The precoded messages are partitioned into non-overlapping segments, and each segment is MDS-coded under certain parameters. {We use $W_k(\Kc)$ to denote a segment of message $W_k$ indexed by $\Kc \subset 1:K_2$.} One special coding group corresponds to the precoded desired message $W^*_{k^*}$, where the precoded message $W^*_{k^*}$ is $(N^{K_2}, N^{K_2})$ MDS-coded into $\tilde{W}_{k^*}$, which is then partitioned into non-overlapping segments $\tilde{W}_{k^*}(\Kc \cup \{k^*\})$ for each $\Kc \subset 1:K_2$, where $\tilde{W}_{k^*}(\Kc \cup \{k^*\})$ has length $m(\Kc \cup \{k^*\})$. The non-overlapping segments of $W^*_{k^*}$ exist because of item 2 in Lemma \ref{lem:property-mnk}. Other coding groups are indexed by non-empty sets $\Kc \subset 1:K_2 \slash \{k^*\}$.
For each $\Kc \subset 1:K_2 \slash \{k^*\}$, the coding group indexed by $\Kc$ is specified as follows.
\begin{itemize}
\item If $k^* \in 1:K_1$, for each $k \in \Kc$, a segment of $W^*_k$ with length $k_1(\Kc)$ is $(n_1(\Kc), k_1(\Kc))$ MDS-coded into $(\tilde{W}_k(\Kc), \tilde{W}_k(\Kc \cup \{k^*\}))$, which have lengths $m(\Kc)$ and $m(\Kc \cup \{k^*\})$, respectively.
\item If $k^* \in K_1+1:K_2$, for each $k \in \Kc$, a segment of $W^*_k$ with length $k_2(\Kc)$ is $(n_2(\Kc), k_2(\Kc))$ MDS-coded into $(\tilde{W}_k(\Kc), \tilde{W}_k(\Kc \cup \{k^*\}))$, which have lengths $m(\Kc)$ and $m(\Kc \cup \{k^*\})$, respectively.
\end{itemize}
The non-overlapping segments of $W^*_k$ for any $k \not= k^*$ exist because of item 3 in Lemma \ref{lem:property-mnk}.

\item \textit{Forming pre-coded message sums:} There are $K_2$ layers in the coding structure. The summation of the MDS-coded messages are placed in the layered structure from top to bottom as follows. 
For $i = 1,2,\ldots, K_2$, in the $i$-th layer, the summations (which are vectors) are
\begin{equation}
A_{1:N}^{[k^*]}(\Kc) = \sum_{k \in \Kc} \tilde{W}_k(\Kc), \quad \forall \Kc\subset 1:K_2 \text{ with } |\Kc| = i,
\end{equation}
and each vector is partitioned and distributed to $N$ servers uniformly. The MDS coded symbols of coding group indexed by $\Kc$ are shown in Table \ref{tbl:Kc-NS}.
\end{enumerate}

\begin{table}[tb!]
\centering
\caption{Placement of coding group indexed by $\Kc$ in the $|\Kc|$-th and $(|\Kc|+1)$-th layers} \label{tbl:Kc-NS}
\resizebox{\textwidth}{!}{
\begin{tabular}{|c|c||c|c|c|c|c|}\hline
Layer & Coding group&\mbox{Servers} $1:N$\\
\hline
$\vdots$ & $\vdots$ & $\vdots$  \\
\hline\hline
 & $\cdots$ & $\cdots$  \\
 $|\Kc|$-th & \textcolor{red}{coding group $\Kc$}: $(n_i(\Kc), k_i(\Kc))$, & $m(\Kc)$ symbols: $\textcolor{red}{\sum_{k \in \Kc} \tilde{W}_k(\Kc)}$ \\
 & where $i=1$ if $k^* \in 1:K_1$ otherwise $i=2$ & $\cdots$  \\
\hline\hline
 & $\cdots$ & $\cdots$  \\
 $(|\Kc|+1)$-th &  & $m(\Kc\cup \{k^*\})$ symbols: $\tilde{W}_{k^*}(\Kc) + \textcolor{red}{\sum_{k \in \Kc} \tilde{W}_k(\Kc \cup \{k^*\})}$ \\
 & $\cdots$ & $\cdots$  \\
\hline\hline
$\vdots$ & $\vdots$ & $\vdots$  \\
\hline
\end{tabular}}
\end{table}

\vspace{0.2cm}
\noindent\textbf{Decoding and correctness:} For any non-empty $\Kc \subset 1:K_2 \slash \{k^*\}$, the MDS-coded interference symbols $(\tilde{W}_k(\Kc), \tilde{W}_k(\Kc \cup \{k^*\}))_{k \in \Kc}$ in the coding group indexed by $\Kc$ are placed in two adjacent layers. Specifically, $(\tilde{W}_{k}(\Kc))_{k \in \Kc}$ are placed in the $|\Kc|$-th layer in the form of a signal
\begin{align}
A_{1:N}^{[k^*]}(\Kc) = \sum_{k \in \Kc} \tilde{W}_{k}(\Kc),
\end{align}
and $(\tilde{W}_{k}(\Kc \cup \{k^*\}))_{k \in \Kc}$ are placed in the $(|\Kc|+1)$-th layer in the form of
\begin{align}
A_{1:N}^{[k^*]}(\Kc \cup \{k^*\}) = \tilde{W}_{k^*}(\Kc \cup \{k^*\}) + \sum_{k \in \Kc} \tilde{W}_{k}(\Kc \cup \{k^*\}).
\end{align}
The interference signal in the top layer can cancel the interference signal in the bottom layer. The interference signal $\sum_{k \in \Kc} \tilde{W}_{k}(\Kc)$ in the $|\Kc|$-th layer (top layer) has length $m(\Kc)$.
\begin{itemize}
\item When $k^* \in 1:K_1$, since $(\tilde{W}_k(\Kc), \tilde{W}_k(\Kc \cup \{k^*\})$ are encoded by the same linear $(n_1(\Kc), k_1(\Kc))$ MDS code for each $k \in \Kc$, by item 1 in Lemma \ref{lem:property-mnk}, that $m(\Kc) = k_1(\Kc)$, the interference signal $\sum_{k \in \Kc} \tilde{W}_{k}(\Kc \cup \{k^*\})$ in the $(|\Kc|+1)$-th layer can indeed be recovered.
\item When $k^* \in K_1+1:K_2$, since $(\tilde{W}_k(\Kc), \tilde{W}_k(\Kc \cup \{k^*\})$ are encoded by the same linear $(n_2(\Kc), k_2(\Kc))$ MDS code for each $k \in \Kc$, by item 1 in Lemma \ref{lem:property-mnk}, that $m(\Kc) \geq k_2(\Kc)$, the interference signal $\sum_{k \in \Kc} \tilde{W}_{k}(\Kc \cup \{k^*\})$ in the $(|\Kc|+1)$-th layer can be recovered.
\end{itemize}
Thus we have $\tilde{W}_{k^*}(\Kc \cup \{k^*\})$ for all $\Kc \subset 1:K$, and the desired message $W_{k^*}$ can be recovered.

\vspace{0.2cm}
\noindent\textbf{Privacy:} The coding pattern, i.e., the manner of forming pre-coded message sums, is the same for the retrieval of any message $W_{k^*}$. Specifically, when the identity of the desired message $k^* \in 1:K_1$,
\begin{align}
n_1(\Kc) & = m(\Kc) + m(\Kc \cup \{k^*\}),
\end{align}
and when $k^* \in K_1:K_2$, 
\begin{align}
n_2(\Kc) & = m(\Kc) + m(\Kc \cup \{k^*\}).
\end{align}
Moreover, there are $m(\Kc)$ summations of form $\Kc$ placed in the $|\Kc|$-th layer. Thus the placements of the pre-coded message sums are the same for retrieving any message $W_{k^*}$. For example, there are $4$ sums of form $b+c$ in the $2^{nd}$ layers of both Table \ref{tbl:eg-NS-a} and Table \ref{tbl:eg-NS-d}. Similarly, the pre-coded sums can be indicated by the set of messages involved, e.g., summations of form $b+c$ are indicated by $\Kc=\{2, 3\}$. 

Since it is a linear code, the coded symbols can be generated by the corresponding coding matrices. When $k^* \in 1:K_1$, the desired precoded message $W^*_{k^*}$ is $(N^{K_2}, N^{K_2})$ MDS coded into $\tilde{W}_{k^*}$; and for each $k\not= k^*$, in the coding group $\Kc \subset 1:K_2 \slash \{k^*\}$ with $k \in \Kc$, 
a non-overlapping segment of $W^*_k$ is the $(n_1(\Kc), k_1(\Kc))$ MDS coded where $n_1(\Kc) : k_1(\Kc) = N:T_1$. Thus for any $k \in 1:K_2$ the coding matrix of MDS coded symbols $\tilde{W}_{k}$ in any $T_1$ servers from the segments of the precoded $W^*_k$ is a $T_1N^{K_2-1} \times T_1N^{K_2-1}$ full rank matrix. By applying Lemma \ref{lem:equivlent}, the messages $W_{1:K_1}$ thus have privacy level $T_1$. 

The statement above also implies that for any $k \in 1:K_2$ the coding matrix of MDS coded symbols $\tilde{W}_{k}$ in any $T_2$ servers from the segments of the precoded $W^*_k$ is a $T_2N^{K_2-1} \times T_2N^{K_2-1}$ full rank matrix. In addition, when $k^* \in K_1+1:K_2$, the desired precoded message $W^*_{k^*}$ is $(N^{K_2}, N^{K_2})$ MDS coded into $\tilde{W}_{k^*}$; and for each $k\not= k^*$, in the coding group $\Kc \subset 1:K_2 \slash \{k^*\}$ with $k \in \Kc$, 
a non-overlapping segment of $W^*_k$ is the $(n_2(\Kc), k_2(\Kc))$ MDS coded where $n_2(\Kc) : k_2(\Kc) = N:T_2$. Thus for any $k^* \in 1:K_2$, the coding matrix of MDS coded symbols $\tilde{W}_{k}$ in any $T_2$ servers from the segments of the precoded $W^*_k$ is a $T_2N^{K_2-1} \times T_2N^{K_2-1}$ full rank matrix. By applying Lemma \ref{lem:equivlent}, the messages $W_{1:K_2}$ thus have privacy level $T_2$. 

\vspace{0.2cm}
\noindent\textbf{Performance:} The message length is $L = N^{K_2}$. The total length of answers is
\begin{align}
\sum_{n = 1}^N \ell_n^{[k^*]} = \sum_{\Kc \subset 1:K_2} m(\Kc).
\end{align}
The rate can thus be computed as 
\begin{align}
&R_{\NS}  = \frac{L}{\sum_{n = 1}^N \Eb[\ell_n^{[k^*]}] } \\
& = \frac{N}{N} \frac{N^{K_2-1}}{\sum_{i = 1}^{K_1} \binom{K_1}{i} d_{i, 0} + \sum_{i = 0}^{K_1} \sum_{j = 1}^{K_2 - K_1} \binom{K_1}{i}\binom{K_2 - K_1}{j} d_{i,j} } \\
& = \frac{N^{K_2-1} }{M \frac{N^{K_1} - T_1^{K_1}}{N - T_1} + N^{K_1} \frac{N^{K_2-K_1} - T^{K_2-K_1}}{N - T_2} }\\
& = \frac{N^{K_2-1} }{T_2^{K_2-K_1} \frac{N^{K_1} - T_1^{K_1}}{N - T_1} + \left((T_1 - T_2) \frac{N^{K_1} - T^{K_1}}{N- T_1} + N^{K_1}\right) \frac{N^{K_2 - K_1} - T_2^{K_2 - K_1}}{N - T_2} }\\
&= \left( 1 + \frac{T_1}{N} + \cdots + \left( \frac{T_1}{N} \right)^{K_1 - 1} + \left( \frac{T_1}{N} \right)^{K_1} \left( 1 + \frac{T_2}{N} + \cdots + \left( \frac{T_2}{N} \right)^{K_2 - K_1 - 1} \right) \right)^{-1}.
\end{align}

{\noindent\textit{Remark:} In the general code construction, the message length is $N^{K_2}$. The message length can be further reduced as long as the length of each non-overlapping segments in Group-wise MDS coding step share a maximum common divisor greater than 1. For the example of the NS scheme for $(N, T_1: K_1, T_2:K_2)=(4,2:2,1:4)$ two-level PIR we discussed in Section \ref{sec:eg-NS}, the message length $L = 64 = N^{K_2} / 4$. It is the same for the NB general scheme we will present in the next section and the example of the NB scheme illustrated in Section \ref{sec:eg-NB}.
}

\section{The Non-uniform Block Cancellation scheme}\label{sec:NB}
From the example in Section \ref{sec:eg-NB}, the proposed NB coding scheme uses the $T$-private code discussed in Section \ref{sec:review} as base codes, and consists of three blocks. The NS coding scheme studied in the previous section naturally degrades to the $T$-private code when $K_1=K$ and $T_1=T_2=T$, thus it is leveraged directly in the NB coding scheme. We first construct two precoded tables, which correspond to the NS codes for messages $W_{1:K_1}$ with privacy level $T_1$ and messages $W_{K_1+1:K_2}$ with privacy level $T_2$, respectively. Then a portion of the precoded Table-A is placed in the first block of NB code, a portion of the precoded Table-B is placed in the second block, and the rest of both precoded tables are mixed and form the third block.

\subsection{Precoded tables}

The message length for the NB coding scheme is $L = N^{K_2}$ for the $(N, T_1:K_1, T_2:K_2)$ two-level PIR system. The NS code proposed in Section \ref{sec:NS} for the $(N, T_1:K_1, T_1:K_1)$ two-level PIR consists of $K_1$ layers and has a  message length $N^{K_1}$. Since the message length here is $L = N^{K_2}=N^{K_1}N^{K_2-K_1}$, the NS code can be applied here by stacking the parameters $(m(\cdot), n_1(\cdot), k_1(\cdot), n_2(\cdot), k_2(\cdot))$ by a factor of $N^{K_2-K_1}$. We shall view this coding structure as precoded Table-A. Similarly, define the NS code with message length $N^{K_2}$ for the $(N, T_2:K_2-K_1, T_2:K_2-K_1)$ two-level PIR with messages $W_{K_1+1:K_2}$ as precoded Table-B.

In the precoded Table-A, there are $K_1$ layers of precoded sums. The precoded sums can be indicated by the set of messages involved. Here $\tilde{m}(\Kc_1)$ summations of composition $\Kc_1$ are placed in the $|\Kc_1|$-th layer for any non-empty subset $\Kc_1 \subset 1:K_1$, where
\begin{align}
\tilde{m}_1(\Kc_1) = N^{K_2-K_1+1} (N - T_1)^{|\Kc_1| - 1} T_1^{K_1 - |\Kc_1|}.
\end{align}
Similarly, there are $K_2-K_1$ layers in the precoded Table-B, and $\tilde{m}(\Kc_2)$ summations of compositions $\Kc_2$ placed in the $|\Kc_2|$-th layer for any non-empty subset $\Kc_2 \subset K_1+1:K_2$, where
\begin{align}
\tilde{m}_2(\Kc_2) = N^{K_1+1} (N - T_2)^{|\Kc_2| - 1} T_2^{K_2 - K_1 - |\Kc_2|}.
\end{align}

When the identity of the desired message $k^*$ satisfies $k^* \in K_1+1:K_2$, the precoded Table-B is well-defined, and the precoded Table-A is a \textit{pure-interference table} specified as follows.
\begin{enumerate}
\item \textit{Precoding:} Let $S_{1:K_1}$ be $K_1$ independent random matrices, which are uniformly drawn from the set of all $N^{K_2} \times N^{K_2}$ full rank matrices over $\Fb_q$; these matrices are known only to the user. The precoded messages $W^*_{1:K_1}$ are
\begin{align}
W^*_{k} = S_{k} W_{k}, \quad \forall k \in 1:K_1.
\end{align}
\item \textit{Group-wise MDS coding:} The precoded messages are partitioned into non-overlapping segments, and each segment is MDS-coded under certain appropriate parameters. The coding groups are indexed by non-empty sets $\Kc_1 \subset 1:K_1$. For any non-empty set $\Kc_1 \subset 1:K_1$, for each $k \in \Kc_1$, a segment of $W^*_k$ with length $\frac{T_2}{N} \tilde{m}_{1}(\Kc_1)$ is $(\tilde{m}_1(\Kc_1), \frac{T_2}{N} \tilde{m}_{1}(\Kc_1))$ MDS-coded into $\tilde{W}_k(\Kc_1)$.

\item \textit{Forming pre-coded message sums:} From the $1^{st}$ layer to the $K_1$-th layer, the summations (vectors) placed in the $i$-th layer are formed as
\begin{align}
\sum_{k \in \Kc_1} \tilde{W}_k(\Kc_1), \quad \forall \Kc_1 \subset 1:K_1 \text{ with } |\Kc_1| = i,
\end{align}
for $i = 1, 2, \ldots, K_1$.
\end{enumerate}
We can similarly define the pure-interference precoded Table-B, with the MDS coding parameters $(\tilde{m}_2(\Kc_2), \frac{T_2}{N}\tilde{m}_2(\Kc_2))$ for any coding group indexed by a nonempty set $\Kc_2 \subset K_1+1:K_2$.

\subsection{Encoding, decoding, privacy, and performance}

\vspace{0.2cm}
\noindent\textbf{Encoding:} When the identity of the desired message $k^* \in 1:K_1$, the precoded Table-A is an $N^{K_2-K_1}$-stacked NS code and the precoded Table-B is a pure-interference table. When $k^* \in K_1+1:K_2$, the precoded Table-A is a pure-interference table while the precoded Table-B is an $N^{K_1}$-stacked NS code. The three blocks of NB code are specified as follows.

In precoded Table-A, there are $\tilde{m}_{1}(\Kc_1)$ precoded summations indexed by $\Kc_1$ for any non-empty set $\Kc_1 \subset 1:K_1$. For each non-empty set $\Kc_1 \subset 1:K_1$, $\frac{T_2}{N}$ fractions of the summations indexed by $\Kc_1$ are placed in the $|\Kc_1|$-th layer of the first block. Thus a $\frac{T_2}{N}$ fraction of the precoded Table-A forms the first block. Similarly, a $\frac{T_2}{N}$ fraction of the precoded Table-B forms the second block. The remaining $\frac{N - T_2}{N}$ fractions of both tables are mixed together to form the third block by simple pairwise summations in an arbitrary order; in case they have different numbers of remaining coded symbols, these remaining summands are included directly. The summations of each form are partitioned and distributed to $N$ servers uniformly.

\vspace{0.2cm}
\noindent\textbf{Decoding and correctness:} When $k^* \in 1:K_1$, the precoded Table-B is a pure-interference table. Since the coding group indexed by non-empty set $\Kc_2 \subset K_1+1:K_2$ are $(\tilde{m}_{2}(\Kc_2), \frac{T_2}{N} \tilde{m}_{2}(\Kc_2))$ MDS coded, the $\frac{T_2}{N}$ fraction of the precoded summations placed in the second block can cancel the $\frac{N - T_2}{N}$ fraction of the precoded summations placed in the third block. After canceling all the interference signals involving messages $W_{K_1+1:K_2}$, the NB code becomes precoded Table-A, which can recover the desired message $W_{k^*}$. Similarly, when $k^* \in K_1+1:K_2$, the precoded Table-A is a pure-interference table, and the MDS parameters $(n, k)$ again satisfy $n : k = N : T_2$. Thus the interference signals in the first block can cancel the interference signals in the third block, and the the desired message $W_{k^*}$ can be recovered by the remaining precoded Table-B.

\vspace{0.2cm}
\noindent\textbf{Privacy:} When $k^* \in 1:K_1$, any $T_1$ of $N$ servers collude may be able to infer the desired message is in $W_{1:K_1}$. However, since the pure-interference precoded Table-B is mixed to the precoded Table-A arbitrarily in the third block, and precoded Table-A has privacy level $T_1$ for retrieving any message in $W_{1:K_1}$, i.e., even if any $T_1$ of $N$ servers collude, the identity of the request message $W_{k^*}$ in $W_{1:K_1}$ remains private. It is straightforward to verify that $W_{1:K_2}$ have privacy level $T_2$ since both precoded tables are $T_2$-private.

\vspace{0.2cm}
\noindent\textbf{Performance:} The message length is $L = N^{K_2}$. The size of precoded Table-A is
\begin{align}
t_1 = \sum_{\Kc_1 \subset 1:K_1} \tilde{m}_1(\Kc_1) = N^{K_2-K_1+1} \sum_{i = 1}^{K_1} \binom{K_1}{i} (N - T_1)^{i - 1} T_1^{K_1 - i} = \frac{N^{K_1} - T_1^{K_1}}{N - T_1} N^{K_2-K_1+1};
\end{align}
the size of  precoded Table-B is
\begin{align}
t_2 = \sum_{\Kc_2 \subset K_1+1:K_2} \tilde{m}_2(\Kc_2) & = N^{K_1 +1} \sum_{j = 1}^{K_2-K_1} \binom{K_2-K_1}{j} (N - T_2)^{j - 1} T_2^{K_2 - K_1 - j} \notag \\
&= \frac{N^{K_2-K_1} - T_2^{K_2-K_1}}{N - T_2} N^{K_1+1};
\end{align}
and the size of the third block is
\begin{align}
m = \left(1 - \frac{T_2}{N} \right) \max\left(t_1, t_2\right). 
\end{align}
Since the sizes of the first block and second block are $\frac{T_2}{N}t_1$ and $\frac{T_2}{N} t_2$ separately, the rate is thus
\begin{align}
&R_{\NB} = \frac{L}{\sum_{n = 1}^N \Eb[\ell_n^{[k^*]}] } = \frac{L}{ \frac{T_2}{N} t_1 + \frac{T_2}{N} t_2 + m} = \frac{L}{ t_1 + t_2 + m - \left(1 - \frac{T_2}{N} \right) (t_1 + t_2)}  \notag \\
&=\frac{N^{K-1}}{\max\left(\frac{N^{K_2-K_1} - T_2^{K_2-K_1}}{N - T_2} N^{K_1}+\frac{N^{K_1} - T_1^{K_1}}{N - T_1} N^{K_2-K_1 - 1}T_2, \frac{N^{K_1} - T_1^{K_1}}{N - T_1} N^{K_2-K_1} + T_2\frac{N^{K_2-K_1} - T_2^{K_2-K_1}}{N - T_2} N^{K_1 - 1} \right)} \notag \\
&= \max \left(D^*(K_2-K_1, T_2) + \frac{T_2}{N} D^*(K_1, T_1), D^*(K_1, T_1) + \frac{T_2}{N} D^*(K_2-K_1, T_2) \right)^{-1}.
\end{align}
\vspace{0.2cm}

\section{Conclusion} \label{sec:con}
We considered two-level private information retrieval systems, and provided a capacity lower bound by proposing two novel code constructions and a capacity upper bound. It is further shown that the upper bound can be improved in a special case, however the improved bound also does not match the proposed lower bound. We suspect the proposed code constructions can also be improved to yield better lower bounds, which we leave as a future work. Some of the techniques given in this work can be adopted to multilevel PIR with more than two privacy levels, and when storage constraint is introduced. The two-level model can be viewed as natural generalization of the canonical PIR model. In addition to the extensions and generalizations we discussed in the introduction section, there have been other PIR models in the literature, such as private computation \cite{Sun2018privatecomputation}, PIR with side information \cite{Heidarzadeh:2021role}, and weakly private information retrieval \cite{samy2021asymmetric}. The multilevel privacy model we proposed here can also be further extended to such scenarios. 

\section*{Appendix}
\begin{proof}[Proof of Lemma \ref{lem:property-mnk}]
It is straightforward to verify that
\begin{align}
d_{i, j} : d_{i+1, j} = T_1 : (N - T_1), \quad \forall i \in 0:K_1-1, \forall j \in 0: K_2-K_1, ~\text{with }i+j > 0
\end{align}
and
\begin{align}
d_{i, j}: d_{i, j+1} \geq T_2 : (N - T_2), \quad \forall i=0,\ldots,K_1, \forall j = 0,\ldots, K_2 - K_1 - 1.
\end{align}
\begin{enumerate}
\item For any non-empty set $\Kc \subset 1:K_2$, let $i = |\Kc \cap 1:K_1|$ and $j = |\Kc \cap K_1+1:K_2|$, if $k^* \in 1:K_1$,
\begin{align}
k_1(\Kc) = T_1 d_{i,j} + T_1 d_{i+1, j} = T_1 d_{i,j} + T_1\frac{N - T_1}{T_1} d_{i,j} = N d_{i,j} = m(\Kc);
\end{align}
if $k^* \in K_1+1:K_2$,
\begin{align}
k_2(\Kc) = T_2 d_{i,j} + T_2 d_{i, j+1} \leq T_2 d_{i,j} + T_2\frac{N - T_2}{T_2} d_{i,j} = N d_{i,j} = m(\Kc);
\end{align}
\item If $k^* \in 1:K_1$,
\begin{align}
&\sum_{\Kc \subset 1:K_2,~k^* \in \Kc} m(\Kc) = \sum_{i = 0}^{K_1 -1} \binom{K_1 -1}{i} N d_{i+1, 0} + \sum_{i = 0}^{K_1 -1} \sum_{j = 1}^{K_2-K_1} \binom{K_1 -1}{i} \binom{K_2-K_1}{j} N d_{i+1, j} \\
&= M N^{K_1} + N^{K_1} \frac{N - T_1}{N - T_2} \left(N^{K_2-K_1} - T_2^{K_2-K_1} \right) = N^{K_1} N^{K_2-K_1} = N^{K_2}.
\end{align}
If $k^* \in K_1+1:K_2$,
\begin{align}
\sum_{\Kc \subset 1:K_2,~k^* \in \Kc} m(\Kc) = \sum_{i = 0}^{K_1} \sum_{j = 0}^{K_2-K_1-1} \binom{K_1}{i} \binom{K_2 -K_1 - 1}{j} N d_{i, j+1} = N^{K_2}.
\end{align}
\item When $k^* \in 1:K_1$, if $k \in 1:K_1$, by $k_1(\Kc) = m(\Kc)$,
\begin{align}
&\sum_{\Kc \subset 1:K_2\slash \{k^*\},~ k \in \Kc}  k_1(\Kc) = \sum_{i=1}^{K_1 - 1} \binom{K_1 - 2}{i - 1} Nd_{i, 0} + \sum_{i=1}^{K_1 - 1} \sum_{j = 1}^{K_2-K_1} \binom{K_1 - 2}{i - 1} \binom{K_2-K_1}{j} Nd_{i, j} \\
&= MT_1 N^{K_1 -1} + T_1(N - T_1) \frac{N^{K_1 - 1}}{N - T_2} \left(N^{K_2-K_1} - T_2^{K_2-K_1} \right) = T_1 N^{K_2 -1} < N^{K_2},
\end{align}
and similarly, if $k \in K_1+1:K_2$,
\begin{align}
\sum_{\Kc \subset 1:K_2\slash \{k^*\},~ k \in \Kc} k_1(\Kc) = \sum_{i=0}^{K_1 - 1} \sum_{j = 1}^{K_2-K_1} \binom{K_1 - 1}{i} \binom{K_2 -K_1 - 1}{j - 1} Nd_{i, j} = T_1 N^{K_2 -1} < N^{K_2}.
\end{align}
When $k^* \in K_1+1:K_2$, if $k \in 1:K_1$, 
\begin{align}
&\sum_{\Kc \subset 1:K_2\slash \{k^*\},~k \in \Kc} k_2(\Kc) \notag\\
&= \sum_{i=1}^{K_1} \binom{K_1 - 1}{i - 1} T_2 d_{i, 0} + \sum_{i = 1}^{K_1} \sum_{j = 1}^{K_2 - K_1 - 1} \binom{K_1 - 1}{i - 1} \binom{K_2 -K_1 - 1}{j} T_2 d_{i, j} \notag \\
&\quad\quad\quad\quad\quad\quad\quad \quad\quad + \sum_{i = 1}^{K_1} \sum_{j = 0}^{K_2 -K_1 - 1} \binom{K_1 - 1}{i - 1} \binom{K_2 -K_1 - 1}{j} T_2 d_{i, j+1}\\
&= M T_2 N^{K_1-1} + (N-T_1)T_2N^{K_1-1} \frac{N^{K_2-K_1} - T_2^{K_2-K_1}}{N - T_2} + (N - T_1)T_2N^{K_2 - 2} \\
&=T_2 N^{K_2-1} + (N-T_1)T_2 N^{K_2-2} < N^{K_2},
\end{align}
and similarly, if $k \in K_1+1:K_2$,
\begin{align}
\sum_{\Kc \subset 1:K_2\slash \{k^*\},~k \in \Kc} k_2(\Kc) &= \sum_{i=0}^{K_1} \sum_{j = 1}^{K_2-K_1 - 1} \binom{K_1}{i} \binom{K_2 -K_1 - 2}{j - 1} T_2(d_{i, j}+d_{i,j+1}) \notag \\
& = T_2 N^{K_2 -1} < N^{K_2}.
\end{align}
\end{enumerate}
\end{proof}

\bibliographystyle{IEEEtran}

\end{document}